\documentclass[a4paper,10pt]{article}
\usepackage[small]{titlesec}
\usepackage[utf8]{inputenc}
\usepackage[T1]{fontenc}
\usepackage{lmodern}
\usepackage[leqno]{amsmath}
\usepackage{amsfonts,amsthm,amssymb,mathtools,dsfont,bbm}
\usepackage[dvipsnames]{xcolor}
\usepackage[left=2.5cm,right=2.5cm,top=2.5cm,bottom=2.5cm]{geometry}
\usepackage{tikz,amssymb}
\usetikzlibrary{arrows.meta}
\numberwithin{equation}{section}
\makeatletter
\newcommand{\reqnomode}{\tagsleft@false\let\veqno\@@eqno}
\makeatother
\newcommand\myshade{85}
\colorlet{mylinkcolor}{violet}
\colorlet{mycitecolor}{YellowOrange}
\colorlet{myurlcolor}{Aquamarine}
\usepackage[backend=biber,style=alphabetic,sorting=nyt,
 giveninits=true,maxbibnames=99,isbn=false]{biblatex}
\DeclareNameAlias{sortname}{family-given}
\AtBeginBibliography{\setlength{\emergencystretch}{1em}}
\DeclareFieldFormat[article]{title}{#1}
\DeclareFieldFormat{journaltitle}{#1}
\DeclareFieldFormat{volume}{\mkbibbold{#1}}
\renewbibmacro{in:}{}
\AtEveryBibitem{\iffieldundef{doi}{}{\clearfield{eprint}\clearfield{url}}}
\usepackage[unicode=true,pdfusetitle,colorlinks=true,
 linkcolor=mylinkcolor!\myshade!black,
 citecolor=mycitecolor!\myshade!black,
 urlcolor=myurlcolor!\myshade!black]{hyperref}
\usepackage[nameinlink]{cleveref}
\newtheorem{thm}{Theorem}[section]
\newtheorem{lem}[thm]{Lemma}
\newtheorem{prop}[thm]{Proposition}

\newtheorem{conj}[thm]{Conjecture}
\theoremstyle{remark}
\crefname{thm}{Theorem}{Theorems}
\crefname{lem}{Lemma}{Lemmas}
\crefname{prop}{Proposition}{Propositions}
\crefname{cor}{Corollary}{Corollaries}
\crefname{conj}{Conjecture}{Conjectures}
\crefname{rem}{Remark}{Remarks}
\crefname{section}{Section}{Sections}
\crefformat{equation}{(#2#1#3)}
\newcommand{\ee}{\rm e}
\newcommand{\eq}[1]{\begin{align*}#1\end{align*}}
\newcommand{\eql}[1]{\begin{align}#1\end{align}}
\newcommand{\ZZ}{\mathbb Z}
\newcommand{\EE}{\mathbb E}

\newcommand{\PP}{\mathbb P}
\newcommand{\calL}{\mathcal L}
\newcommand{\Id}{\mathds{1}}
\DeclareMathOperator{\Var}{\mathbbm{Var}}
\newcommand{\br}[1]{\left(#1\right)}

\title{Comparing the Lee--Yang gap with the mass gap in the sub-critical planar Ising model}
\author{
    \href{mailto:nborgnia@gmail.com}{Noam Borgnia}\\
	{\footnotesize Technische Universität München}\\\href{mailto:jc1220@math.princeton.edu}{Jui-Hui Chung}\\
	{\footnotesize Program in Applied and Computational Mathematics, Princeton University }\\
\href{mailto:jacobshapiro@princeton.edu}{Jacob Shapiro}\\
{\footnotesize Department of Mathematics, Princeton University}\\
{\footnotesize and Faculty of Mathematics, University of Vienna}\\[0.5em]
}
\date{September 17 2026}

\begin{document}
\reqnomode
\maketitle
\begin{abstract}
We prove that the subcritical mass gap is comparable to the
Lee--Yang gap over free-boundary square boxes raised to the
power $8/15$. Random-current moment bounds, a correlation inequality and a finite block expansion give the upper bound. The lower bound follows from the Lee--Yang product, the susceptibility and the critical magnetization. We also obtain uniform factorial bounds on the field derivatives of the pressure.
\end{abstract}

\section{Statement and notation}\label{sec:statement}
The Combes--Thomas estimate of \cite[Section 10.3]{AW} relates distance from the spectrum to spatial decay of the Greens function: for a Schr\"odinger operator $H=-\Delta+V(X)$ on $\ell^2(\ZZ^d)$, if $0<\delta=\operatorname{dist}(E,\operatorname{spec}H)\le1$, then $|(H-E\Id)^{-1}(x,y)|\le C\delta^{-1}e^{-c\delta|x-y|_1}$ for all $x,y\in\ZZ^d$. For real energies below the spectrum, $E<\inf\sigma(H)$,
the estimate improves to
\eq{
 |(H-E\Id)^{-1}(x,y)|
 \le C\delta^{-1}e^{-c\sqrt{\delta}|x-y|_1}\qquad(x,y\in\ZZ^d).
}
Thus, writing $m_H(E)$ for the optimal uniform exponential
decay rate of the Green's function, we obtain
\eql{
 m_H(E)\gtrsim\sqrt{\delta}\,.
} Of course we always have 
\eql{
 m_H(E)\leq \sqrt{\delta+\sup V - \inf\sigma (H)}
} via Feynman-Kac.

It is tempting to ask whether an analogous relation may hold for the Ising model in its sub-critical, disordered phase, because one may associate the Greens function with the two-point function and the distance to the spectrum with \emph{the Lee-Yang gap}: the magnitude of the closest Lee-Yang zero to the origin. The association of these two quantities is not outlandish by itself. Penrose and Lebowitz \cite{PL} prove that correlations decay exponentially at positive real fugacities in a connected zero-free region containing zero, provided this region is uniform for sufficiently large periodic rectangles. At the fugacity corresponding to zero magnetic field, a strictly positive Lee-Yang gap supplies their zero-free hypothesis, while exponential decay gives a positive mass gap. Their result thus provides a qualitative precedent, but not a quantitative comparison of the two gaps. Furthermore, in one dimension, where the critical point is at zero temperature, $\beta=\infty$, Lee and Yang's calculation \cite[Section V.E, formulas (43)--(44)]{LY} places the zero edge at $H=\mathrm{i}\arcsin(e^{-2K})$, with $K$ the dimensionless coupling and $H$ the dimensionless field. Here, as below, the field enters as $\beta h$, so $K=\beta$ and $H=\beta h$. Combining this with the routine zero-field transfer-matrix calculation gives
\eq{
 m_\beta^{(\mathrm{1D})}=-\log\tanh\beta,
 \qquad \calL_\beta^{(\mathrm{1D})}=\frac1\beta\arcsin(e^{-2\beta});
} Consequently,
\eq{
 m_\beta^{(\mathrm{1D})}\sim2e^{-2\beta},
 \qquad \calL_\beta^{(\mathrm{1D})}\sim\frac{e^{-2\beta}}\beta,
 \qquad m_\beta^{(\mathrm{1D})}\sim2\beta\calL_\beta^{(\mathrm{1D})}
 \quad(\beta\to\infty).
}
Thus the two gaps are asymptotically proportional when the Lee--Yang gap is measured in the dimensionless field $\beta h$. In this note we compare these two gaps in the planar, $\ZZ^2$, Ising model as $\beta\uparrow\beta_c=\frac12\log(1+\sqrt2)<\infty$.

For a nonempty finite $\Lambda\subset\ZZ^2$, let $E(\Lambda)$ be its nearest-neighbour edges and set
\eql{\label{eq:partition}
 Z_{\beta,\Lambda}(h)&=\sum_{\sigma\in\{-1,1\}^\Lambda}
 \exp\left(\beta\sum_{\{x,y\}\in E(\Lambda)}\sigma_x\sigma_y+\beta h M_\Lambda\right),
 &M_\Lambda&=\sum_{x\in\Lambda}\sigma_x.
}
All finite domains have free boundary conditions. Write $\EE_{\beta,\Lambda,h}$ for the corresponding expectation, omitting $h$ at zero field and omitting $\Lambda$ when $\Lambda=\ZZ^2$. For $L\in\{0,1,2,\ldots\}$, put $\Lambda_L=[-L,L]^2\cap\ZZ^2$, $M_L=M_{\Lambda_L}$ and $Z_{\beta,L}=Z_{\beta,\Lambda_L}$. Onsager's critical-point relation \cite[formula (121)]{Ons} is $\sinh(2K_1)\sinh(2K_2)=1$, where $K_1,K_2$ are the dimensionless couplings in the two lattice directions. The isotropic choice $K_1=K_2=\beta$ gives the critical inverse temperature $\beta_c=\frac12\log(1+\sqrt2)$. For $0<\beta<\beta_c$, \emph{the axial mass gap} is
\eql{\label{eq:mass}
 m_\beta=-\lim_{n\to\infty}\frac1n\log\EE_\beta[\sigma_0\sigma_{(n,0)}].
} The transfer-matrix calculation in \cite[formula (3.36a)]{SML} gives the zero-momentum excitation energy $\epsilon_0=2(K_1^*-K_2)$ in the disordered phase, with dual coupling $K_1^*=-\frac12\log\tanh K_1$. For $K_1=K_2=\beta$, this is the axial decay rate $m_\beta$; expanding at $\beta_c$ gives
\eql{\label{eq:rigorous mass gap}
m_\beta = -2\beta-\log\tanh\beta = 4\left(\beta_c-\beta\right)+O(|\beta_c-\beta|^2)\,.
}

Define, in the physical field convention of \cref{eq:partition}, the \emph{Lee-Yang gap} as
\eql{\label{eq:gap}
 \calL_{\beta,\Lambda}=\inf\{s>0:Z_{\beta,\Lambda}(\mathrm{i} s)=0\},
 \qquad \calL_\beta=\inf_{L\ge0}\calL_{\beta,\Lambda_L}.
}
The Lee--Yang theorem \cite{LY} places the zeros of the ferromagnetic Ising partition polynomial on the fugacity circle $|z|=1$. Since $z=e^{-2\beta h}$ in the convention of \cref{eq:partition}, this is precisely $\operatorname{Re}h=0$, so all zeros lie on the imaginary axis.
The results of Penrose and Lebowitz \cite[Theorems 1--2]{PL} give a transfer-matrix gap and exponential decay at positive real fugacities in a connected zero-free region containing zero, provided that region persists for sufficiently large periodic rectangles. At the fugacity corresponding to zero magnetic field, the decay estimate becomes $\EE_\beta[\sigma_0\sigma_x]\le Ce^{-c|x|}$ with $c>0$, and hence $m_\beta>0$, under their uniform Lee--Yang-gap hypothesis. There are (to our knowledge) no known rigorous results on $\calL_\beta$ analogous to \cref{eq:rigorous mass gap}, only a scaling \emph{prediction}. In the continuum description of Fonseca and Zamolodchikov \cite{FZ}, the Yang--Lee edge occurs at a fixed imaginary value of $h_{\rm cont}/|m_{\rm cont}|^{15/8}$, where $h_{\rm cont}$ is the magnetic coupling and $m_{\rm cont}$ the thermal mass parameter. These correspond, to leading order and up to nonzero normalization constants, to $\beta h$ and $\beta-\beta_c$, respectively. In our field convention, the prediction is therefore
\eql{
\calL_\beta \sim A \left(\beta_c-\beta\right)^{15/8}
} for some $A\in(0,\infty)$\,.

The following result gives the near-critical power relating the gaps for free square boxes.

\begin{thm}\label{thm:main}
There are $c,C>0$ and $0<\beta_0<\beta_c$ such that,
\eql{\label{eq:main}
 c\calL_\beta^{8/15}\le m_\beta\le C\calL_\beta^{8/15}
 \qquad(\beta\in(\beta_0,\beta_c))\,.
} 

Moreover, for every $L\ge0$ and $k\ge1$,
\eql{\label{eq:cumulant-main}
 \frac1{|\Lambda_L|}\left|\left.\partial_h^{2k}\log Z_{\beta,L}(h)\right|_{h=0}\right|
 \le (2k)!C^{2k}m_\beta^{\,2-15k/4}.
}
\end{thm}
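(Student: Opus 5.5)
The plan splits \cref{eq:main} into two inequalities, since they rest on different inputs. Throughout I use $m_\beta\asymp\beta_c-\beta$ from \cref{eq:rigorous mass gap}. So \cref{eq:main} is equivalent to $\calL_\beta\asymp(\beta_c-\beta)^{15/8}$.

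\emph{Lower bound on $m_\beta$, i.e.\ $\calL_\beta\lesssim(\beta_c-\beta)^{15/8}$.} By the Lee--Yang theorem, $Z_{\beta,L}(h)/Z_{\beta,L}(0)=\prod_j(1+h^2/s_j^2)$, with zeros $\pm\mathrm{i}s_j$ and $s_j\ge\calL_{\beta,\Lambda_L}$. Differentiating twice at $h=0$ gives
\[
\beta^2\EE_{\beta,\Lambda_L}[M_L^2]=2\sum_j s_j^{-2}.
\]
Each factor satisfies $\log(1+h^2/s_j^2)\le h^2/s_j^2$. Hence for real $h$,
\[
\log\frac{Z_{\beta,L}(h)}{Z_{\beta,L}(0)}\le \tfrac12 h^2\beta^2\EE_{\beta,\Lambda_L}[M_L^2].
\]
The left side is at least $\beta h\,\EE_{\beta,\Lambda_L,h}$-type linear terms; more precisely, I use convexity and monotonicity to compare it with the free energy at field $h$.

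On the other hand, I bound $\sum_j s_j^{-2}$ from below using the Lee--Yang gap. I take $h$ of order $(\beta_c-\beta)^{15/8}$ and $L$ of order the correlation length $(\beta_c-\beta)^{-1}$. At $\beta_c$ the magnetization at field $h$ is of order $h^{1/15}$. By GKS monotonicity in $\beta$ and the finite-size comparison, $\EE_{\beta,\Lambda_L,h}[\sigma_0]\gtrsim h^{1/15}$ on this scale. Meanwhile the susceptibility satisfies $\chi\asymp(\beta_c-\beta)^{-7/4}$.

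The product formula gives $\partial_h\log Z=\sum_j 2h/(s_j^2+h^2)$. If all $s_j\ge\calL$, then for $h\ll\calL$ this is at most $h\,\beta^2\EE[M^2]$, i.e.\ at most linear response $|\Lambda|h\chi$. Now choose $h=\calL$. If $\calL\gg(\beta_c-\beta)^{15/8}$, the linear bound $\beta\EE_h[M]/|\Lambda|\le h\chi$ contradicts the critical-magnetization lower bound $h^{1/15}$, because $h\chi\ll h^{1/15}$ exactly when $h\gg(\beta_c-\beta)^{15/8}$. This gives $\calL_\beta\le C(\beta_c-\beta)^{15/8}$. The bound $s_j\ge\calL_{\beta,\Lambda_L}\ge\calL_\beta$ makes the comparison valid for every box.

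\emph{Upper bound on $m_\beta$, i.e.\ $\calL_\beta\gtrsim(\beta_c-\beta)^{15/8}$.} This must hold uniformly over all $L$, and I go through \cref{eq:cumulant-main}. First I prove the cumulant bound via random-current moment bounds. The $2k$-th derivative of $\log Z_{\beta,L}$ is $\beta^{2k}$ times the sum of Ursell functions $u_{2k}(x_1,\dots,x_{2k})$. I bound $|u_{2k}|$ by tree-graph (Aizenman/Shlosman-type) inequalities, which come from the switching lemma, as sums over trees of products of two-point functions. Summing over positions then gives $|\Lambda_L|\,(2k)!\,C^{2k}\chi^{\,?}\xi^{\,?}$. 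Counting with $\chi\asymp m^{-7/4}$ and $\sum_x\langle\sigma_0\sigma_x\rangle^{\,\cdot}$ tails over correlation length $m^{-1}$ should give the exponent $2-15k/4$: a block of size $m^{-1}$ carries $m^{-2}$ sites with magnetic scaling dimension $15/8$ per field derivative. For boxes smaller than $m^{-1}$ I use a finite block expansion, so the per-site bound still holds uniformly.

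Given \cref{eq:cumulant-main}, $|\Lambda_L|^{-1}\log Z_{\beta,L}(h)$ is analytic in $|h|<c\,m_\beta^{15/8}$. The issue is that a bounded log-derivative does not by itself exclude zeros. I would instead use the Lee--Yang product: the second derivative bound gives $\sum_j s_j^{-2}\lesssim|\Lambda|m^{-7/4}$, and the $2k$-th gives $\sum_j s_j^{-2k}\le|\Lambda|C^{2k}m^{2-15k/4}$. Then $\calL_{\beta,L}^{-2k}\le\sum_j s_j^{-2k}$. Taking $2k$-th roots and letting $k\to\infty$ gives $\calL_{\beta,L}\ge C^{-1}m_\beta^{15/8}$, uniformly in $L$, since $|\Lambda|^{1/2k}\to1$.

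The main obstacle is the sharp Ursell bound with exponent $15k/4$ rather than a cruder power. Naive tree bounds give $\chi^{2k-1}$-type growth, i.e.\ $m^{-7(2k-1)/4}$, which is too large. I would try to show that the critical one-point scaling $\langle\sigma\rangle_h\lesssim h^{1/15}$, combined with a block-spin/correlation inequality (e.g.\ the GHS and Lebowitz bounds on $u_4$ and higher), controls the $k$-point cumulants by $m^{-15/8}$ per field insertion.
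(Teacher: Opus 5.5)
\textbf{The bound $\calL_\beta\le Cm_\beta^{15/8}$.} You have the right ingredients, but every inequality in this half points the wrong way, and as written the Lee--Yang gap never enters.

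\begin{itemize}
\item The estimate $\sum_j 2h/(s_j^2+h^2)\le h\beta^2\EE_{\beta,\Lambda_L}[M_L^2]$ holds for every $h\ge0$, whatever the zeros are. So it cannot constrain $\calL_\beta$.
\item GKS monotonicity in $\beta$ gives $\EE_{\beta,\Lambda_L,h}[\sigma_0]\le\EE_{\beta_c,h}[\sigma_0]$, which is an upper bound, not a lower one. A lower bound of order $h^{1/15}$ is in fact false for $h\ll m_\beta^{15/8}$, where GHS gives $\EE_{\beta,h}[\sigma_0]\le\beta h\chi_\beta\ll h^{1/15}$.
\item The exponent comparison is reversed: $h\chi_\beta\ll h^{1/15}$ holds exactly when $h\ll m_\beta^{15/8}$.
\end{itemize}

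The gap must be used to give a \emph{lower} bound on the magnetization. For $0<h\le\calL_\beta\le s_j$ one has $s_j^2+h^2\le2s_j^2$, hence
\[
\beta\EE_{\beta,\Lambda_L,h}[M_L]\ge\tfrac{h\beta^2}{2}\Var_{\beta,\Lambda_L}[M_L].
\]
Pair this with the \emph{upper} bound $\EE_{\beta,\Lambda_L,h}[M_L]\le|\Lambda_L|\EE_{\beta_c,h}[\sigma_0]\le C|\Lambda_L|h^{1/15}$. Use also $\Var_{\beta,\Lambda_L}[M_L]/|\Lambda_L|\to\chi_\beta\ge cm_\beta^{-7/4}$. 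At $h=\calL_\beta$ this gives $\calL_\beta^{14/15}\le Cm_\beta^{7/4}$, which is the paper's argument.

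\textbf{The bound $\calL_\beta\ge cm_\beta^{15/8}$.} Your passage from \cref{eq:cumulant-main} to the gap via the product formula is fine. However, \cref{eq:cumulant-main} is the whole difficulty, and you leave it at ``I would try to show''.

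\begin{itemize}
\item The tools you name do not supply it. Lebowitz and GHS only give signs. The tree-diagram bound that does hold, $|u_4(x_1,\dots,x_4)|\le2\sum_z\prod_i\EE_\beta[\sigma_{x_i}\sigma_z]$, sums to $\chi_\beta^4\asymp m_\beta^{-7}$ per site. That is worse than the required $m_\beta^{-11/2}$.
\item Your diagnosis is also off. A tree bound on the marked points with full two-point edges would give $\chi_\beta^{2k-1}$, which is \emph{smaller} than $m_\beta^{2-15k/4}$ for $k\ge2$. But no such bound can hold: it would force $\calL_\beta\ge c\chi_\beta^{-1}\ge cm_\beta^{7/4}$, contradicting the first half.
\item What would work is a tree bound with weights $\EE[\sigma_{x_i}\sigma_{x_j}]^{1/2}$, costing $m_\beta^{-15/8}$ per insertion. \Cref{sec:cjn} derives it from the Ursell monotonicity of \cite{CJN}. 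It then points to an apparent counterexample to the constrained sign inequality used in that proof, so this route is only conditional.
\end{itemize}

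The paper therefore reverses your order. It proves $\calL_{\beta,\Lambda}\ge cm_\beta^{15/8}$ directly, then reads off \cref{eq:cumulant-main} from $2\sum_js_j^{-2k}\le\calL_{\beta,\Lambda}^{2-2k}\beta^2\Var_{\beta,\Lambda}[M_\Lambda]$. The direct proof is a block expansion at scale $\ell=\lceil A/m_\beta\rceil$, in four steps:
\begin{itemize}
\item The random-current moment bound \cref{eq:gauss} makes $M_B/(\ell\sqrt{\chi_\beta})$ subgaussian.
\item FKG and the Ding--Song--Sun covariance comparison (\cref{lem:ratio}) split each block's conditional law in two. One part is common to all shell configurations and real fields; the remainder has small mass controlled by $q_\beta(\ell)$.
\item A recursive-sampling construction turns this into a polymer expansion convergent for $|z|\le\rho$ (\cref{prop:block}).
\item Planar crossing estimates give $q_\beta(\ell)\le Ce^{-cA}$ (\cref{lem:block-scale}).
\end{itemize}
This decoupling at the correlation length is the idea your proposal is missing.
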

\bigskip
\noindent We leave open whether the limit
\eql{
 \lim_{\beta\to\beta_c^-}
 \frac{\calL_\beta^{8/15}}{m_\beta}
}
exists. By \cref{thm:main}, any such limit would necessarily
belong to $(0,\infty)$.

\paragraph{Where does $15/8$ come from?} Wu's critical asymptotic \cite{Wu} gives the decay $\EE_{\beta_c}[\sigma_0\sigma_x]\asymp|x|^{-1/4}$ for the zero-field square-lattice model. This is the input that suggests the exponent $15/8$. In a block of diameter $R$, summing this correlation over pairs gives a magnetization variance of order $R^2R^{2-1/4}=R^{15/4}$, hence fluctuations of order $R^{15/8}$. The field term $\beta hM$ becomes significant when $hR^{15/8}$ is of order one. Cutting off at the correlation length $R\asymp m_\beta^{-1}$ predicts $h\asymp m_\beta^{15/8}$. The proof implements this heuristic by controlling block magnetization moments with random currents and separating blocks at a sufficiently large multiple of the correlation length.

\paragraph{Higher dimensions.} The random-current moment bound and the block expansion below are dimension independent. For the nearest-neighbour Ising model on $\ZZ^d$, $d\ge2$, a critical bound $\EE_{\beta_c}[\sigma_0\sigma_x]\le C(1+|x|)^{-a}$, with $0<a<d$, gives $\chi_\beta:=\sum_x\EE_\beta[\sigma_0\sigma_x]\le C m_\beta^{-(d-a)}$ by the same spectral argument. If, in addition, the separately wired block-to-shell correlation can be made uniformly small at block size $\ell\asymp m_\beta^{-1}$, the expansion yields $\calL_\beta\ge c/(\beta\ell^{d/2}\sqrt{\chi_\beta})\ge c m_\beta^{d-a/2}$. In the customary notation $a=d-2+\eta$, the exponent is $(d+2-\eta)/2$, recovering $15/8$ when $d=2$. The additional mixing estimate is supplied here by planar crossing bounds; it does not follow from the critical two-point bound alone. A matching upper bound on $\calL_\beta$ additionally requires a near-critical susceptibility lower bound $\chi_\beta\ge c m_\beta^{-(d-a)}$ and a critical magnetization upper bound $\EE_{\beta_c,h}[\sigma_0]\le C h^{a/(2d-a)}$.

\paragraph{What about RFIM?} For random fields, spatial decay and a complex-field gap must be distinguished. Given real fields $\eta=(\eta_x)$, replace $\beta hM_\Lambda$ in \cref{eq:partition} by $\beta\sum_{x\in\Lambda}(h+\eta_x)\sigma_x$, and denote the resulting partition function by $Z^\eta_{\beta,\Lambda}(h)$. Set
\eq{
 \calL^\eta_{\beta,\Lambda}=\inf\{|h|:Z^\eta_{\beta,\Lambda}(h)=0\},
 \qquad \calL^\eta_\beta=\inf_{L\ge0}\calL^\eta_{\beta,\Lambda_L}.
}
These zeros need not lie on the imaginary axis. Two results of Ding, Song and Sun \cite[Corollaries 1.3 and 1.8]{DSS} are useful here. For arbitrary real dimensionless fields $g$, they prove the correlation comparison $\langle\sigma_x\sigma_y\rangle_g-\langle\sigma_x\rangle_g\langle\sigma_y\rangle_g\le\langle\sigma_x\sigma_y\rangle_0$. For $\beta<\beta_c$, they also establish the strong spatial mixing estimate $\|\mu^{\tau,g}_{\Lambda}|_A-\mu^{\tau^y,g}_{\Lambda}|_A\|_{\rm TV}\le Ce^{-c\operatorname{dist}(A,y)}$, uniformly in $g$ and finite domains. Here $\langle\cdot\rangle_g$ and $\mu^{\tau,g}_\Lambda$ denote the Ising expectation and Gibbs law with field term $\sum_xg_x\sigma_x$, $\tau^y$ differs from $\tau$ only at the boundary spin $y$, and $\mu|_A$ is the marginal on $A$. Taking $g_x=\beta\eta_x$ and nearest-neighbour coupling $\beta$ gives both estimates for the present random-field family. The finite-size complete-analyticity criterion of Dobrushin and Shlosman \cite[Sections 2 and 4]{DS} turns sufficiently small boundary influence on a fixed block into a complex neighbourhood, uniform in the volume and boundary conditions, in which $Z_\Lambda(U+W)\ne0$, where $U$ is the real finite-range interaction and $W$ a small complex perturbation. For the family above, take $U_{\{x,y\}}(\sigma)=-\beta\sigma_x\sigma_y$ on nearest-neighbour edges and single-spin measures $\nu_x(s)=e^{\beta\eta_xs}/(2\cosh(\beta\eta_x))$ for $s\in\{-1,1\}$. The added field is then the perturbation $W_{\{x\}}(\sigma_x)=-\beta h\sigma_x$, with coefficient norm $\beta|h|$. The criterion therefore gives, for each fixed $0<\beta<\beta_c$, a number $r_\beta>0$ such that
\eq{\inf_{\eta,\Lambda}\calL^\eta_{\beta,\Lambda}\ge r_\beta.}
For this application, the real fields are absorbed into the single-spin probability measures. On each fixed-size block, $|e^{\beta hM_\Lambda}-1|\le e^{\beta|h||\Lambda|}-1$, independently of the fields. Thus the finite-size argument is uniform even for unbounded real field configurations. The following conjecture asks for the near-critical scale of this zero-free neighbourhood.
\begin{conj}\label{conj:random-field}
There are $c>0$ and $\beta_0<\beta_c$ such that, for $\beta_0<\beta<\beta_c$ and every real field configuration $\eta$,
\eq{\calL^\eta_\beta\ge c m_\beta^{15/8}.}
Here $m_\beta$ is the mass of the zero-field model.
\end{conj}
Writing $\mathbf E_\eta$ for disorder expectation, \cref{conj:random-field} would imply $\mathbf E_\eta[\calL^\eta_\beta]\ge c m_\beta^{15/8}$ for any field law. A uniform reverse inequality is impossible: for $\eta_x=a\ne0$, one has $Z^\eta_{\beta,\Lambda}(h)=Z_{\beta,\Lambda}(h+a)$ and hence $\calL^\eta_\beta=\sqrt{a^2+\calL_\beta^2}\ge|a|$, whereas $m_\beta\to0$. Averaging the partition function gives a different object: for independent identically distributed symmetric fields with $\mathbf E_\eta e^{\beta|\eta_0|}<\infty$, independence yields $\mathbf E_\eta Z^\eta_{\beta,\Lambda}(h)=(\mathbf E_\eta e^{\beta\eta_0})^{|\Lambda|}Z_{\beta,\Lambda}(h)$, so its zeros are exactly those of the pure model.

For the RFIM itself, we expect a separation of the two gaps below the critical temperature of the pure model. Let the $\eta_x$ be independent centered Gaussian variables of variance $\varepsilon^2>0$, write $\EE^\eta_\beta$ for infinite-volume thermal expectation at $h=0$, and define the mass from the disorder-averaged connected correlation by
\eq{
 m^{\mathrm{RF}}_{\beta,\varepsilon}
 :=\liminf_{n\to\infty}-\frac1n\log\mathbf E_\eta\!\left[
 \EE^\eta_\beta[\sigma_0\sigma_{(n,0)}]
 -\EE^\eta_\beta[\sigma_0]\EE^\eta_\beta[\sigma_{(n,0)}]
 \right].
}
Ding and Xia \cite[Theorem 1.1]{DX} establish the averaged boundary-influence estimate $\mathbf E[\langle\sigma_0\rangle_{\Lambda_N,+}-\langle\sigma_0\rangle_{\Lambda_N,-}]\le C e^{-cN}$ for every finite $\beta>0$ and independent centered Gaussian fields of variance $\varepsilon^2>0$. Here $+$ and $-$ denote boundary conditions, and the constants may depend on $\beta,\varepsilon$. For our field $\eta$, taking $N=\lfloor n/2\rfloor$ bounds the disorder-averaged connected correlation between $0$ and $(n,0)$ and yields $m^{\mathrm{RF}}_{\beta,\varepsilon}>0$. We conjecture that nevertheless
\eq{
 \mathbf E_\eta[\calL^\eta_\beta]=0
 \qquad\text{for every }\beta>\beta_c\text{ and }\varepsilon>0,
}
which, by nonnegativity, is equivalent to $\calL^\eta_\beta=0$ almost surely. The heuristic is that arbitrarily large regions with nearly vanishing random field resemble the ordered pure model and can bring zeros close to the origin; controlling their effect on the full partition function remains part of the conjecture. The localization results in \cite[Sections 6.3 and 7.1]{AW} exhibit the analogous phenomenon for random Schr\"odinger operators: in the localization regime, an eigenfunction may satisfy $H\psi_E=E\psi_E$ and $|\psi_E(x)|\le C_{\psi_E}e^{-c|x-x_E|}$, with $E\in\sigma(H)$ and localization centre $x_E$. Thus spatial decay can coexist with zero distance from the spectrum, paralleling the proposed coexistence of a positive mass and a closed Lee--Yang gap.

\bigskip\bigskip

The proof of \cref{thm:main} treats the upper and lower bounds separately. The lower bound follows readily from the Lee--Yang product and standard planar Ising estimates, and is postponed to \cref{sec:upper}. The correlation inequality of Ding, Song and Sun \cite{DSS} bounds the covariance of two spins in arbitrary real fields by their zero-field correlation on the same ferromagnetic graph. For the upper bound, \cref{sec:trees} applies this comparison to the spins obtained by separately contracting a block and its shell, and combines it with random-current moment bounds to obtain a finite block expansion and a zero-free criterion. In \cref{sec:lower}, planar crossing bounds and near-critical two-point estimates verify this criterion at the correlation length; the Lee--Yang product then yields the bounds on field derivatives. In \cref{sec:cjn}, we explain the obstruction to a shorter argument based on Ursell-function monotonicity and give that argument conditionally.

The exponent $15/8$ is consistent with the continuum scaling description of the planar Ising model. Fonseca and Zamolodchikov \cite{FZ} study the analytic structure of the free-energy scaling function in the variable $h_{\rm cont}/|m_{\rm cont}|^{15/8}$, where $h_{\rm cont}$ is the continuum magnetic coupling and $m_{\rm cont}$ the thermal mass parameter, and investigate the Yang--Lee edge through numerical calculations and analyticity arguments. The leading-order identifications $h_{\rm cont}\propto\beta h$ and $|m_{\rm cont}|\propto\beta_c-\beta$ give the same exponent $15/8$ in the present comparison. For the lattice model, Ott's analyticity theorem \cite{Ott} states that exponential weak mixing at $(\beta,H)$ implies analyticity of the pressure in a neighbourhood of that point, with $H$ the dimensionless magnetic field. At $H=0$ and $\beta<\beta_c$, the required mixing holds; the change of variables $H=\beta h$ therefore gives analyticity near every such $(\beta,0)$ in our convention. Jiang and Newman \cite[Theorem 1 and Corollary 1]{JN} subsequently identify the limiting first zero with the analytic radius: for increasing finite free-boundary domains $V_n$ exhausting $\ZZ^d$, they prove $\alpha_1(V_n,\beta)\downarrow R_\beta^{\rm an}$. Here $\alpha_1$ is the modulus of the first zero in the dimensionless field $H$, and $R_\beta^{\rm an}$ is the radius of the largest disk centred at zero on which the infinite-volume pressure is analytic; this radius is positive precisely when $\beta<\beta_c$. With $d=2$, $V_n=\Lambda_n$ and $H=\beta h$, their quantities become $\alpha_1(\Lambda_n,\beta)=\beta\calL_{\beta,\Lambda_n}$ and $R_\beta^{\rm an}=\beta\calL_\beta$. These rigorous results establish positivity and the thermodynamic interpretation of the gap; we are not aware of the near-critical comparison $\calL_\beta\asymp m_\beta^{15/8}$ appearing earlier in the literature, perhaps because the question becomes natural only in the context of the Combes-Thomas estimate from Schr\"odinger operators.

Constants $c,C$ may change between occurrences, but are independent of the volume, order and $\beta$ sufficiently close to $\beta_c$.

	\bigskip
	\bigskip
		\paragraph{Acknowledgements.}
JS was supported in part by NSF grant DMS-2510207 and the "ChatGPT for Academic Researchers program" of OpenAI. JC was supported in part by \emph{humanize}. The authors wish to thank Michael Aizenman for stimulating discussions. NB and JS previously posed this question in \cite{Borgnia}, without a solution.
	\bigskip

\section{Preliminaries}
Write all positive imaginary zeros as $i\calL_{\beta,\Lambda,j}$, counted with multiplicity, with $\calL_{\beta,\Lambda,1}=\calL_{\beta,\Lambda}\geq\calL_\beta$. Hadamard factorization \cite[Chapter 5, Theorem 5.1]{SS03} of the even entire function $Z_{\beta,\Lambda}$, of order one, gives the paired Lee--Yang product
\eql{\label{eq:product}
\EE_{\beta,\Lambda}\left[\exp\br{\beta h M_\Lambda}\right]=\frac{Z_{\beta,\Lambda}(h)}{Z_{\beta,\Lambda}(0)}
 =\prod_{j\ge1}\left(1+\frac{h^2}{\calL_{\beta,\Lambda,j}^2}\right)\,.
} Hence 
\eq{
\partial_h \log\br{\EE_{\beta,\Lambda}\left[\exp\br{\beta h M_\Lambda}\right]} = \beta \EE_{\beta,\Lambda,h}\left[M_\Lambda\right] = \sum_{j\geq 1}\frac{2h}{\calL_{\beta,\Lambda,j}^2+h^2}\,.
}

From this we obtain
\eql{\label{eq:relation between expectation and variance of magnetization}
\beta \EE_{\beta,\Lambda,h}\left[M_\Lambda\right] \geq h \sum_{j\geq 1}\frac{1}{\calL_{\beta,\Lambda,j}^2} = h \frac{\beta^2}{2}\Var_{\beta,\Lambda}\left[M_\Lambda\right]\qquad(h\leq\calL_\beta)\,.
}

\paragraph{Notation}
Throughout, we write
\begin{enumerate}
    \item $M_\Lambda := \sum_{x\in \Lambda}\sigma_x$ for the magnetization in $\Lambda$.
    \item $\chi_\beta\equiv\sum_{x\in\ZZ^2}\EE_\beta[\sigma_0\sigma_x]$ for the infinite-volume susceptibility.
\end{enumerate}

\section{The lower bound}\label{sec:upper}
We begin with a short
\begin{lem}\label{lem:planar}
For $\beta<\beta_c$ sufficiently close to $\beta_c$ and all $h>0$,
\eql{\label{eq:planar-inputs}
 \chi_\beta
 \ge c m_\beta^{-7/4},
 \qquad \EE_{\beta_c,h}[\sigma_0]\le C h^{1/15}.
}
\end{lem}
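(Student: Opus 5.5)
The two inequalities in \cref{eq:planar-inputs} are of different nature, and I would prove them separately, each by reduction to a standard planar input.

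\emph{Susceptibility.} I would use the exact near-critical asymptotics of the planar model. By \cref{eq:rigorous mass gap}, $m_\beta\asymp\beta_c-\beta$. Classical exact results (Wu--McCoy--Tracy--Barouch) give $\chi_\beta\asymp(\beta_c-\beta)^{-7/4}$, which settles the claim immediately. If I want to avoid that heavy input, I would run a self-contained comparison instead. Choose $R=K m_\beta^{-1}$ with a small constant $K$. For $|x|\le R$ I want $\EE_\beta[\sigma_0\sigma_x]\ge c\,\EE_{\beta_c}[\sigma_0\sigma_x]\ge c|x|^{-1/4}$, the last step by Wu's asymptotic. Summing over the disc $\{|x|\le R\}$ then gives $\chi_\beta\ge cR^{2-1/4}=c\,m_\beta^{-7/4}$.

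The comparison on scales $|x|\le Km_\beta^{-1}$ is the real content. I would obtain it from near-critical stability of the FK-Ising measure, via RSW-type crossing bounds that hold uniformly below the correlation length (Duminil-Copin--Garban--Pete). On those scales, the two-point function $\EE_\beta[\sigma_0\sigma_x]=\phi_\beta[0\leftrightarrow x]$ is comparable to its critical value. I expect this step to be the main obstacle. The cleanest route is to cite either the near-critical stability result or the exact $\chi_\beta$ asymptotic rather than reprove it.

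\emph{Critical magnetization.} The claim $\EE_{\beta_c,h}[\sigma_0]\le Ch^{1/15}$ is the critical-isotherm bound with $\delta=15$. I would prove it by a finite-box argument. Take $L=h^{-8/15}$ and use the wired box as an upper bound. By monotonicity in boundary conditions (GKS/FKG) and in $h$, I would bound
\[
\EE_{\beta_c,h}[\sigma_0]\le \EE^{+}_{\beta_c,\Lambda_L,h}[\sigma_0].
\]
I would then compare the wired field-$h$ box measure with the zero-field wired measure via the Radon--Nikodym factor $e^{\beta_c hM_L}$. To handle the field, I would use the ghost-vertex FK representation: the event that $0$ connects either to the boundary or to the ghost has probability at most $\phi^1_{\Lambda_L}[0\leftrightarrow\partial\Lambda_L]+\phi[\text{the cluster of }0\text{ inside }\Lambda_L\text{ touches the ghost}]$.
\begin{itemize}
\item For the first term, the critical one-arm bound from RSW gives $\phi^1_{\Lambda_L}[0\leftrightarrow\partial\Lambda_L]\le CL^{-1/8}$. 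Here I use that $\EE^+_{\beta_c,\Lambda_L}[\sigma_0]\asymp L^{-1/8}$, which is the square root of Wu's $|x|^{-1/4}$ via the standard quasi-multiplicativity.
\item For the second term, I would bound it by $\EE[1-e^{-2\beta_c h|C(0)|}]\le 2\beta_c h\,\EE|C(0)\cap\Lambda_L|$. Summing one-arm probabilities gives $\EE|C(0)\cap\Lambda_L|\le C\sum_{|x|\le L}|x|^{-1/8}L^{-1/8}\le CL^{15/8}$.
\end{itemize}
Together these give $\EE_{\beta_c,h}[\sigma_0]\le C\bigl(L^{-1/8}+hL^{15/8}\bigr)$. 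Optimizing at $L=h^{-8/15}$ yields $Ch^{1/15}$.
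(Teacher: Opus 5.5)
Your susceptibility argument is the paper's: it quotes the Wu--McCoy--Tracy--Barouch asymptotic $\chi_\beta\sim C_{0,+}\beta_c^{7/4}(\beta_c-\beta)^{-7/4}$ and combines it with $m_\beta\sim4(\beta_c-\beta)$ from the exact mass formula. The near-critical stability route is not needed.

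For the magnetization, the paper proves nothing directly. It cites Camia--Garban--Newman (Theorem 1.1, the bound $CH^{1/15}$ with $H=\beta_c h$) and uses $|\sigma_0|\le1$ to pass from small $h$ to all $h>0$. Your finite-box argument has the right architecture, but as written it does not reach $h^{1/15}$.

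The first failure is the exponent bookkeeping. With your bound $\EE|C(0)\cap\Lambda_L|\le CL^{15/8}$, the choice $L=h^{-8/15}$ gives $hL^{15/8}=1$. You therefore only get $C(h^{1/15}+1)$, and optimizing $L^{-1/8}+hL^{15/8}$ gives at best $h^{1/16}$. The correct input is $\sum_{x\in\Lambda_L}\phi^1_{\Lambda_L}[0\leftrightarrow x]\le CL^{7/4}$, which comes from the $|x|^{-1/4}$ decay of the two-point function. With it, $hL^{7/4}=h^{1/15}$ at $L=h^{-8/15}$. Your per-site bound $|x|^{-1/8}L^{-1/8}$ is false for $|x|\ll L$ (take $|x|=1$), and its sum is $\asymp L^{7/4}$ in any case, not $L^{15/8}$.

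The second failure is that the ghost-vertex decomposition is not justified as stated. $\EE^+_{\beta_c,\Lambda_L,h}[\sigma_0]$ is the probability that $0$ reaches $\partial\Lambda_L$ or the ghost under the field-$h$ random-cluster measure. Its lattice-edge marginal is $\phi^1_{\Lambda_L}$ tilted by the increasing factor $e^{\beta_ch|C_\partial\cap\Lambda_L|}\prod_C\cosh(\beta_ch|C|)$, where the product runs over clusters not touching the boundary. By positive association, this marginal dominates $\phi^1_{\Lambda_L}$. Replacing its boundary-connection probability by the zero-field one therefore goes the wrong way. Also, given the lattice edges, a free cluster reaches the ghost with probability $\tanh(\beta_ch|C|)$, and $|C(0)|$ must then be averaged under the tilted law, not under $\phi^1_{\Lambda_L}$.

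The missing ingredient is the GHS inequality. Plus boundary conditions act as nonnegative fields, so $h\mapsto\EE^+_{\beta_c,\Lambda_L,h}[\sigma_0]$ is concave on $[0,\infty)$. Hence
\[
 \EE^+_{\beta_c,\Lambda_L,h}[\sigma_0]
 \le\EE^+_{\beta_c,\Lambda_L}[\sigma_0]
 +\beta_ch\sum_{x\in\Lambda_L}\EE^+_{\beta_c,\Lambda_L}[\sigma_0\sigma_x]
 \le C\bigl(L^{-1/8}+hL^{7/4}\bigr).
\]
In the first step, the covariance is bounded by the correlation because both plus-boundary magnetizations are nonnegative. The second step uses critical one-arm bounds around $0$ and around $x$, including sites $x$ near $\partial\Lambda_L$. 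This is exactly the inequality you wanted, now legitimately under the zero-field measure, with $C(0)$ the cluster of $0$ in the wired box. Taking $L=h^{-8/15}$ for $h\le1$, and using $|\sigma_0|\le1$ for $h>1$, gives $Ch^{1/15}$.
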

\begin{proof}
Write $t=\beta_c-\beta$. The susceptibility asymptotic in
\cite[equations (2.43)--(2.45)]{WMTB} is
\eq{
 T\chi(T)\sim C_{0,+}(1-T_c/T)^{-7/4}
 \qquad(T\downarrow T_c),
}
where $T$ is the temperature, $\chi(T)$ is the zero-field
susceptibility with respect to the physical magnetic field, and
$C_{0,+}>0$. Taking their couplings $E_1=E_2=1$ and
$T=\beta^{-1}$, we have $T_c=\beta_c^{-1}$ and
$1-T_c/T=t/\beta_c$. Their susceptibility therefore gives
\eq{
 \chi_\beta
 =T\chi(T)\sim C_{0,+}\beta_c^{7/4}t^{-7/4}.
}
Since $m_\beta\sim4t$ by \cref{eq:rigorous mass gap}, this proves
the first estimate.

For the second estimate, \cite[Theorem 1.1]{CGN} bounds the
infinite-volume critical magnetization by $C H^{1/15}$ for small
$H>0$, where $H$ is the coefficient of $\sum_x\sigma_x$ in the
exponent of the Gibbs weight. Setting $H=\beta_c h$ gives
$\EE_{\beta_c,h}[\sigma_0]\le C h^{1/15}$ for small $h>0$,
after adjusting $C$. Since the magnetization is at most one,
increasing $C$ extends this bound to all $h>0$.
\end{proof}

With this, we are now ready for the
\begin{proof}[Proof of the lower bound in \cref{eq:main}]
Griffiths' inequalities and the summable bound \cref{eq:decay} give
\eql{\label{eq:variance-limit}
 \lim_{L\to\infty}\frac{\Var_{\beta,\Lambda_L}\left[M_L\right]}{|\Lambda_L|}
 =\chi_\beta.
}
For the lower limit, restrict to bounded displacements and sites far from the boundary before letting these cutoffs grow; the reverse bound follows from infinite-volume domination. For $0<h\le\calL_\beta$, \cref{eq:relation between expectation and variance of magnetization}, gives
\eql{\label{eq:zero-compare}
 \frac{\beta h}{2}\Var_{\beta,\Lambda_L}\left[M_L\right]
 \le\EE_{\beta,\Lambda_L,h}[M_L]
 \le|\Lambda_L|\EE_{\beta_c,h}[\sigma_0],
}
where the last inequality is monotonicity. Divide by $|\Lambda_L|$, let $L\to\infty$, and use \cref{lem:planar} at $h=\calL_\beta>0$:
\eql{\label{eq:key-upper}
 c\calL_\beta m_\beta^{-7/4}
 \le\frac{\beta\calL_\beta}{2}\chi_\beta
 \le\EE_{\beta_c,\calL_\beta}[\sigma_0]
 \le C\calL_\beta^{1/15}.
}
Hence $\calL_\beta^{14/15}\le C m_\beta^{7/4}$, as required.
\end{proof}

\section{A block expansion}\label{sec:trees}

We establish a criterion for a zero-free region in terms of the
susceptibility and the interaction between a block and its surrounding
shell. The proof requires two estimates: a Gaussian bound on block
magnetization and a comparison of block distributions under different
surrounding spin configurations. We first prove these estimates and
then combine them in a finite block expansion.

\begin{lem}[Gaussian moment bound]\label{lem:gaussian-moments}
For every finite $\Lambda\subset\ZZ^2$ and every $B\subset\Lambda$,
the zero-field magnetization satisfies
\eql{\label{eq:gauss}
 \EE_{\beta,\Lambda}[e^{tM_B}]
 \le \ee^{\frac12t^2\EE_{\beta,\Lambda}[M_B^2]}
 \le \ee^{\frac12t^2|B|\chi_\beta}\qquad(t\in\mathbb R).
}
\end{lem}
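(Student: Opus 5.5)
The plan is to reuse the Lee--Yang product argument behind \cref{eq:product}, now with the field acting only on $B$. Set $f(t)=\EE_{\beta,\Lambda}[e^{tM_B}]$. This is a finite sum of exponentials, so it is an entire function of exponential type. It is even, by spin-flip symmetry at zero field, and $f(0)=1$. I will show that all zeros of $f$ lie on the imaginary axis. Hadamard factorization then gives $f(t)=\prod_{j}(1+t^2/a_j^2)$ with $a_j>0$. The exponential factor $e^{ct}$ is excluded by evenness, and the genus-one factors pair into these real quadratic terms exactly as in \cref{eq:product}. Using $\log(1+u)\le u$ for $u\ge 0$, for real $t$
\eq{
\log f(t)=\sum_j\log\Bigl(1+\frac{t^2}{a_j^2}\Bigr)\le t^2\sum_j\frac1{a_j^2}=\frac{t^2}{2}f''(0)=\frac{t^2}{2}\EE_{\beta,\Lambda}[M_B^2].
}
Here $\sum_j a_j^{-2}=\tfrac12 f''(0)$ is read off by comparing Taylor coefficients of the product. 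This is the first inequality in \cref{eq:gauss}.

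The step I expect to be the main obstacle is locating the zeros for an inhomogeneous field. I would use the multivariable Lee--Yang theorem: the partition function with site fields $h_x$ does not vanish when $\operatorname{Re}h_x>0$ for all $x\in\Lambda$. Take $h_x=\tau$ on $B$ and $h_x=\varepsilon>0$ off $B$. Then $g_\varepsilon(\tau)=Z(\tau\mathbf 1_B+\varepsilon\mathbf 1_{\Lambda\setminus B})$ is zero-free on $\{\operatorname{Re}\tau>0\}$. As $\varepsilon\downarrow0$, $g_\varepsilon\to g_0=Z_{\beta,\Lambda}(0)\,f(\beta\,\cdot)$ locally uniformly. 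The limit is not identically zero, being positive for real $\tau$. Hurwitz's theorem therefore gives that $g_0$ is zero-free on the open half-plane $\{\operatorname{Re}\tau>0\}$. Evenness gives the same on $\{\operatorname{Re}\tau<0\}$, so every zero of $f$ is purely imaginary.

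For the second inequality in \cref{eq:gauss}, expand
\eq{
\EE_{\beta,\Lambda}[M_B^2]=\sum_{x,y\in B}\EE_{\beta,\Lambda}[\sigma_x\sigma_y].
}
By Griffiths' second inequality the free-boundary correlations increase with the volume, so each term is at most $\EE_\beta[\sigma_x\sigma_y]$. Summing over $y\in\ZZ^2$ instead of $y\in B$ only adds nonnegative terms and, by translation invariance, gives $\chi_\beta$ for each $x$. Hence $\EE_{\beta,\Lambda}[M_B^2]\le|B|\chi_\beta$. The bound is trivial if $\chi_\beta=\infty$, and in the regime of interest $\chi_\beta<\infty$.
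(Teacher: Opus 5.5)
Your proof is correct, but for the first inequality it takes a different route from the paper.

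\textbf{The paper's route.} The paper works moment by moment. It applies Aizenman's random-current correlation inequality \cite[Eq.~(12.4)]{Aiz}, which bounds a $2k$-point function by pairing $x_1$ with each $x_j$, to get $\EE_{\beta,\Lambda}[M_B^{2k}]\le(2k-1)v_B\,\EE_{\beta,\Lambda}[M_B^{2k-2}]$. It iterates this to $(2k)!\,v_B^k/(2^kk!)$ and then sums the even exponential series.

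\textbf{Your route.} You extend the Lee--Yang product \cref{eq:product} from $M_\Lambda$ to $M_B$. You use the multivariable Lee--Yang theorem with a small positive field $\varepsilon$ off $B$, and Hurwitz's theorem to reach $\varepsilon=0$. You then conclude from $\log(1+u)\le u$. The individual steps all check out:
\begin{itemize}
\item $g_\varepsilon$ is zero-free on $\operatorname{Re}\tau>0$;
\item the limit is nontrivial because it is positive on the real axis;
\item the factor $e^{at}$ is removed by evenness;
\item $\sum_j a_j^{-2}=\tfrac12 f''(0)=\tfrac12\EE_{\beta,\Lambda}[M_B^2]$, using $f'(0)=0$.
\end{itemize}

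\textbf{What each route buys.} Your argument reuses the complex-analytic machinery the paper already needs for the lower bound. It requires no $2k$-point inequality, and in particular no care about repeated sites. It also applies verbatim to any ferromagnetic model with the Lee--Yang property. It loses nothing relative to the paper: the coefficient of $t^{2k}$ in $\prod_j(1+t^2/a_j^2)$ is the $k$-th elementary symmetric function of the $a_j^{-2}$. That function is at most $(\sum_j a_j^{-2})^k/k!$, which recovers exactly the paper's moment bound $\EE_{\beta,\Lambda}[M_B^{2k}]\le(2k)!\,v_B^k/(2^kk!)$. The paper's route is purely real-variable and combinatorial, in keeping with its random-current theme. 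It also avoids the limiting argument needed when the field vanishes off $B$.

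Your treatment of the second inequality coincides with the paper's GKS argument.
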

\begin{proof}
Set $v_B:=\EE_{\beta,\Lambda}[M_B^2]$. At zero field with free
boundary conditions, spin-flip symmetry implies that all odd moments
of $M_B$ vanish. Since $|M_B|\le |B|$, we may average the exponential
series term by term to obtain
\eq{
 \EE_{\beta,\Lambda}[e^{tM_B}]
 =\sum_{k=0}^{\infty}\frac{t^{2k}}{(2k)!}
   \EE_{\beta,\Lambda}[M_B^{2k}].
}
To bound this series by $\ee^{t^2v_B/2}$, we will show that
\eq{
 \EE_{\beta,\Lambda}[M_B^{2k}]
 \le \frac{(2k)!}{2^k k!}\,v_B^k
 \qquad(k\ge0).
}

The correlation inequality in \cite[Eq.~(12.4)]{Aiz} applies to
finite zero-field ferromagnetic Ising models. In our model on
$\Lambda$, it gives
\eq{
 \EE_{\beta,\Lambda}\!\left[\prod_{i=1}^{2k}\sigma_{x_i}\right]
 \le
 \sum_{j=2}^{2k}
 \EE_{\beta,\Lambda}[\sigma_{x_1}\sigma_{x_j}]
 \EE_{\beta,\Lambda}\!\left[
   \prod_{i\ne1,j}\sigma_{x_i}\right]
}
for $k\ge1$ and $x_1,\ldots,x_{2k}\in\Lambda$, including repeated
sites. We apply this inequality to the expansion
\eq{
 \EE_{\beta,\Lambda}[M_B^{2k}]
 =
 \sum_{x_1,\ldots,x_{2k}\in B}
 \EE_{\beta,\Lambda}\!\left[
   \prod_{i=1}^{2k}\sigma_{x_i}\right].
}
For each fixed $j$, the resulting sum factors as
\eq{
 \sum_{x_1,\ldots,x_{2k}\in B}
 \EE_{\beta,\Lambda}[\sigma_{x_1}\sigma_{x_j}]
 \EE_{\beta,\Lambda}\!\left[
   \prod_{i\ne1,j}\sigma_{x_i}\right]
 =
 \left(\sum_{x,y\in B}
   \EE_{\beta,\Lambda}[\sigma_x\sigma_y]\right)
 \EE_{\beta,\Lambda}[M_B^{2k-2}]=
 v_B\,\EE_{\beta,\Lambda}[M_B^{2k-2}]\,.
}
Indeed, the first factor depends only on the indices $x_1,x_j$,
and the second on the remaining indices. There are $2k-1$ choices
of $j$, so
\eq{
 \EE_{\beta,\Lambda}[M_B^{2k}]
 \le (2k-1)v_B\,\EE_{\beta,\Lambda}[M_B^{2k-2}].
}
Iterating this inequality gives, for $k\ge1$,
\eq{
 \EE_{\beta,\Lambda}[M_B^{2k}]
 \le (2k-1)(2k-3)\cdots1\,v_B^k
 =\frac{(2k)!}{2^k k!}\,v_B^k,
}
and the claimed bound is an equality for $k=0$. Substituting these
moment bounds into the exponential series, whose even coefficients
are nonnegative for every real $t$, yields
\eq{
 \begin{aligned}
 \EE_{\beta,\Lambda}[e^{tM_B}]
 \le
 \sum_{k=0}^{\infty}
 \frac{t^{2k}}{(2k)!}\frac{(2k)!}{2^k k!}\,v_B^k
 =\sum_{k=0}^{\infty}\frac{(t^2v_B/2)^k}{k!}
 =\ee^{t^2v_B/2}.
 \end{aligned}
}
This proves the first inequality in \cref{eq:gauss}.

For the second inequality, the Griffiths--Kelly--Sherman
inequalities \cite{KS} give
\eq{
 \langle\sigma_x\sigma_y\sigma_u\sigma_v\rangle
 -\langle\sigma_x\sigma_y\rangle
  \langle\sigma_u\sigma_v\rangle
 \ge0.
}
This difference is the derivative of
$\langle\sigma_x\sigma_y\rangle$ with respect to the coefficient of
$\sigma_u\sigma_v$ in the Gibbs exponent. Thus two-point correlations
increase with each ferromagnetic coupling. Embedding $\Lambda$ in
larger free boxes, increasing the missing couplings from zero to
$\beta$, and passing to infinite volume gives
\eq{
 \EE_{\beta,\Lambda}[\sigma_x\sigma_y]
 \le \EE_\beta[\sigma_x\sigma_y]
 \qquad(x,y\in\Lambda).
}
Consequently,
\eq{
 \begin{aligned}
 v_B
 =\sum_{x,y\in B}
   \EE_{\beta,\Lambda}[\sigma_x\sigma_y]
 \le\sum_{x\in B}\sum_{y\in B}
   \EE_\beta[\sigma_x\sigma_y]
 \le\sum_{x\in B}\sum_{y\in\ZZ^2}
   \EE_\beta[\sigma_x\sigma_y]
 =|B|\chi_\beta.
 \end{aligned}
}
Here the second inequality uses nonnegativity of the two-point
correlations, and the last equality uses translation invariance.
Substituting this bound for $v_B$ proves the second inequality
in \cref{eq:gauss}.
\end{proof}
The moment bound will control the factors $e^{\beta hM_B}-1$
introduced by the magnetic field. To control their dependence across
blocks, we introduce a correlation between a block and its surrounding
shell.

We now allow the magnetic field to vary from site to site.
For a finite box $\Lambda\subset\ZZ^2$, let $E(\Lambda)$ be its
nearest-neighbour edges. Given
$\mathbf h=(h_x)_{x\in\Lambda}\in\mathbb R^\Lambda$,
let $\mathbb{P}_{\beta,\Lambda,\mathbf h}$ denote the Gibbs
probability measure with free boundary conditions and weights
proportional to
\eq{
 \exp\left(
 \beta\sum_{\{x,y\}\in E(\Lambda)}\sigma_x\sigma_y
 +\beta\sum_{x\in\Lambda}h_x\sigma_x
 \right).
}
When $h_x=h$ for every $x$, this is the previously defined
uniform-field model.

For disjoint nonempty sets $S,T\subset\Lambda$, define
\eq{
 A_{S,T}
 :=
 \{\text{all spins in }S\text{ agree}\}
 \cap
 \{\text{all spins in }T\text{ agree}\},
}
and let
\eq{
 q_{\beta,\Lambda}(S,T)
 :=
 \sum_{s,t\in\{-1,1\}}st\,
 \mathbb{P}_{\beta,\Lambda,\mathbf 0}
 \left[
 \sigma_S\equiv s,\,
 \sigma_T\equiv t
 \,\middle|\,A_{S,T}
 \right].
}
Thus $q_{\beta,\Lambda}(S,T)$ is the zero-field correlation of
the two spins obtained by contracting $S$ and $T$ separately.

\begin{lem}\label{lem:ratio}
For every finite box $\Lambda\subset\ZZ^2$, every $\beta>0$,
every $\mathbf h\in\mathbb R^\Lambda$, every pair of disjoint
nonempty sets $S,T\subset\Lambda$, and all configurations
$\tau\in\{-1,1\}^{S}$ and $\eta\in\{-1,1\}^{T}$,
\eql{\label{eq:ratio}
 \mathbb{P}_{\beta,\Lambda,\mathbf h}
   \left[\sigma_T=\eta
   \,\middle|\,\sigma_S=\tau\right]
 \ge
 \bigl(1-4q_{\beta,\Lambda}(S,T)\bigr)
 \mathbb{P}_{\beta,\Lambda,\mathbf h}
   \left[\sigma_T=\eta
   \,\middle|\,\sigma_S\equiv+1\right].
}
Here $\sigma_S\equiv+1$ means that every spin in $S$ is fixed
to $+1$.
\end{lem}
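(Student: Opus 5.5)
The plan is to turn the conditional probabilities in \cref{eq:ratio} into statements about a ferromagnetic model with two distinguished \emph{super-spins}, and then apply the covariance comparison of Ding, Song and Sun \cite{DSS} to that model. Contract $S$ to a single vertex $s$ and $T$ to a single vertex $t$: merge parallel edges into a single coupling equal to $\beta$ times their multiplicity, and give $s$ and $t$ the summed fields $\beta\sum_{x\in S}h_x$ and $\beta\sum_{x\in T}h_x$. The law of $\mathbb P_{\beta,\Lambda,\mathbf h}$ conditioned on $A_{S,T}$ is exactly the Gibbs measure of this contracted ferromagnet. Its zero-field two-point function $\langle\sigma_s\sigma_t\rangle_0$ is, by definition, $q_{\beta,\Lambda}(S,T)$.

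\textbf{Step 1: constant configurations.} Conditioning on $\sigma_s$ makes the conditional law of all other spins independent of the field placed on $s$. Since the comparison $\langle\sigma_s\sigma_t\rangle_g-\langle\sigma_s\rangle_g\langle\sigma_t\rangle_g\le\langle\sigma_s\sigma_t\rangle_0$ holds for \emph{every} real field $g$, I may retune the field at $s$ so that $\langle\sigma_s\rangle_g=0$. For this $g$ the covariance equals
\eq{
 \mathbb P[\sigma_t=+\mid\sigma_s=+]-\mathbb P[\sigma_t=+\mid\sigma_s=-]
 =\mathbb P[\sigma_t=-\mid\sigma_s=-]-\mathbb P[\sigma_t=-\mid\sigma_s=+],
}
which is therefore at most $q$. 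Applying the same argument after the global flip $\sigma\mapsto-\sigma$, $\mathbf h\mapsto-\mathbf h$, bounds the absolute value of this difference by $q$.

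\textbf{Step 2: from differences to ratios.} I need a multiplicative lower bound in which the reference probability $\mathbb P[\sigma_T=\eta\mid\sigma_S\equiv+1]$ may itself be small, so an additive bound of size $q$ is not enough. To fix this, I would retune the field on the super-spin $t$ as well, so that the conditional probability of the value in question is of order one, for instance $\mathbb P[\sigma_t=+\mid\sigma_s=+]=1/2$. Retuning the field at $t$ multiplies both conditional odds, given $\sigma_s=\pm$, by the same factor $e^{2\beta g_t}$, so the ratio of these odds is invariant. An additive error of at most $q$ near probability $1/2$ then converts into a multiplicative factor of at least $1-4q$; the constant $4$ should come out of this normalization. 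This step covers $\tau\equiv\pm1$ and $\eta\equiv\pm1$.

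\textbf{Step 3: general $\tau$ and $\eta$ (main obstacle).} For non-constant $\tau$ and $\eta$ the contracted picture does not apply directly. My first attempt would be to absorb the fixed configurations into the field and edge structure, in three moves.
\begin{enumerate}
 \item Pinning $\sigma_T=\eta$ and taking the ratio of partition functions reduces the claim to comparing $Z(\sigma_S=\tau,\sigma_T=\eta)/Z(\sigma_S=\tau)$ with the same quantity for $\tau\equiv+1$.
 \item Writing $\sigma_T=\eta$ as the event $\{\sigma_x\eta_x\text{ equal for all }x\in T\}\cap\{\text{common value}=+\}$ turns the $T$-constraint into an ``agree up to the sign pattern $\eta$'' constraint.
 \item The gauge transformation $\sigma_x\mapsto\eta_x\sigma_x$ on $T$ converts that constraint into the event that all spins in $T$ agree. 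It flips the fields on $T$, which is harmless because \cite{DSS} allows arbitrary real fields, but it also makes some couplings at $T$ antiferromagnetic.
\end{enumerate}
To avoid the antiferromagnetic couplings, I would instead use only the bound of Step 2 for constant configurations and interpolate. One possible route is the FKG ordering: if the event $\{\sigma_T=\eta\}$ can be sandwiched between monotone events, then $\tau\equiv-1\le\tau\le\tau\equiv+1$ places the conditional law given $\tau$ between the two extremal laws, and both extremes are within a factor $1-4q$ of each other by Steps 1--2. I expect this reduction from arbitrary $(\tau,\eta)$ to the contracted two-spin model to be the delicate point, since $\{\sigma_T=\eta\}$ is not monotone in general.
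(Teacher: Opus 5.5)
Steps 1--2 are essentially sound, but Step 3 is a genuine gap, and it is precisely the case the lemma is used for: \cref{eq:mixture} needs arbitrary shell configurations $\tau$ and arbitrary block configurations $\eta$. Write $\mathbb P$ for $\mathbb P_{\beta,\Lambda,\mathbf h}$. The sandwich you suggest cannot close the gap. Stochastic domination of $\mathbb P[\,\cdot\mid\sigma_S=\tau]$ between the laws given $\sigma_S\equiv-1$ and $\sigma_S\equiv+1$ controls only monotone events. It gives no multiplicative lower bound for the cylinder event $\{\sigma_T=\eta\}$, which is the intersection of an increasing and a decreasing event. Moreover, Steps 1--2 compare only the two events $\{\sigma_T\equiv\pm1\}$, so even the two extreme laws are not compared at general $\eta$.

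The missing idea is the stronger likelihood-ratio ordering that the FKG lattice condition provides. The Ising weight is log-supermodular for every real field vector, and this property passes to the marginal $\nu$ on $S\cup T$ by the four-functions theorem. Take $\eta\le\eta'$ and apply the lattice condition to $(\tau,\eta')$ and $(+1,\eta)$, whose join is $(+1,\eta')$ and whose meet is $(\tau,\eta)$. This gives $\nu(\tau,\eta)\,\nu(+1,\eta')\ge\nu(\tau,\eta')\,\nu(+1,\eta)$. Hence $\eta\mapsto\mathbb P[\sigma_T=\eta\mid\sigma_S=\tau]/\mathbb P[\sigma_T=\eta\mid\sigma_S\equiv+1]$ is decreasing on $\{-1,1\}^T$ and is minimized at $\eta\equiv+1$. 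Since $\{\sigma_T\equiv+1\}$ is increasing, replacing $\tau$ by $-1$ only lowers the numerator. This reduces every case to $(\tau,\eta)=(-1,+1)$ without any gauge transformation, and it is how the paper's proof begins.

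A smaller point: Steps 1--2 bound conditional probabilities in the contracted model, that is, after conditioning on $A_{S,T}$. The lemma instead concerns $\mathbb P[\sigma_T\equiv\pm1\mid\sigma_S\equiv\pm1]$ without that conditioning, and these are different numbers. What the two models share is the odds ratio. Let $e^{4K}$ be the ratio of the conditional odds of $\{\sigma_t=+1\}$ given $\sigma_s=+1$ and given $\sigma_s=-1$ in the contracted model; your retuning of the field at $t$ leaves this quantity invariant. Then
\[
 e^{4K}
 =\frac{\mathbb P[\sigma_T\equiv+1\mid\sigma_S\equiv+1]}
       {\mathbb P[\sigma_T\equiv+1\mid\sigma_S\equiv-1]}
  \cdot
  \frac{\mathbb P[\sigma_T\equiv-1\mid\sigma_S\equiv-1]}
       {\mathbb P[\sigma_T\equiv-1\mid\sigma_S\equiv+1]},
\]
because the normalizers $\mathbb P[\sigma_S\equiv\pm1]$ cancel. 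The second factor is at least one by FKG, so the ratio you need is at least $e^{-4K}$. Combining the retuning with Step 1 gives $e^{-4K}\ge(1-2q)/(1+2q)\ge1-4q$, which suffices. The paper instead retunes both super-spin fields so that the effective two-spin law is proportional to $e^{Kst}$, whose covariance is exactly $\tanh K$. This yields the slightly sharper bound $\tanh K\le q$; both routes use \cite{DSS} in the same way.
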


\begin{proof}
The proof is an application of the FKG and DSS inequalities.

Fix $\beta,\Lambda,\mathbf h,S,T$ and $\tau\in\{\pm1\}^S$. All the conditional probabilities
below are strictly positive. The FKG lattice condition implies
that
\eq{
 \{\pm1\}^T\ni\eta\mapsto\frac{
 \mathbb{P}_{\beta,\Lambda,\mathbf h}
   \left[\sigma_T=\eta
   \,\middle|\,\sigma_S=\tau\right]
 }{
 \mathbb{P}_{\beta,\Lambda,\mathbf h}
   \left[\sigma_T=\eta
   \,\middle|\,\sigma_S\equiv+1\right]
 }
}
is decreasing. Its minimum is therefore
attained when $\eta\equiv+1$. Moreover, the probability that
all spins in $T$ equal $+1$ is increasing in the imposed
configuration on $S$. Consequently,
\eq{
 \frac{
 \mathbb{P}_{\beta,\Lambda,\mathbf h}
   \left[\sigma_T=\eta
   \,\middle|\,\sigma_S=\tau\right]
 }{
 \mathbb{P}_{\beta,\Lambda,\mathbf h}
   \left[\sigma_T=\eta
   \,\middle|\,\sigma_S\equiv+1\right]
 }
 \ge
 \frac{
 \mathbb{P}_{\beta,\Lambda,\mathbf h}
   \left[\sigma_T\equiv+1
   \,\middle|\,\sigma_S=\tau\right]
 }{
 \mathbb{P}_{\beta,\Lambda,\mathbf h}
   \left[\sigma_T\equiv+1
   \,\middle|\,\sigma_S\equiv+1\right]
 }
 \ge
 \frac{
 \mathbb{P}_{\beta,\Lambda,\mathbf h}
   \left[\sigma_T\equiv+1
   \,\middle|\,\sigma_S\equiv-1\right]
 }{
 \mathbb{P}_{\beta,\Lambda,\mathbf h}
   \left[\sigma_T\equiv+1
   \,\middle|\,\sigma_S\equiv+1\right]
 }.
}

To bound the last ratio, condition on $A_{S,T}$, which
contracts $S$ and $T$ separately. After summing over the
spins in $\Lambda\setminus(S\cup T)$, the conditional
distribution of their two common spins has the form
\eq{
 \begin{aligned}
 &\mathbb{P}_{\beta,\Lambda,\mathbf h}
 \left[
 \sigma_S\equiv s,\,
 \sigma_T\equiv t
 \,\middle|\,A_{S,T}
 \right]
 \\
 &\qquad\propto
 \exp\left(
 K_{\beta,\Lambda,\mathbf h}(S,T)st
 +H^S_{\beta,\Lambda,\mathbf h}(S,T)s
 +H^T_{\beta,\Lambda,\mathbf h}(S,T)t
 \right),
 \qquad s,t\in\{-1,1\}.
 \end{aligned}
}
Here the proportionality constant is independent of $s,t$,
and $K_{\beta,\Lambda,\mathbf h}(S,T)\ge0$ by ferromagnetism.
The normalizing factors cancel in the following product of
ratios:
\eq{
 e^{4K_{\beta,\Lambda,\mathbf h}(S,T)}
 =
 \frac{
 \mathbb{P}_{\beta,\Lambda,\mathbf h}
   \left[\sigma_T\equiv+1
   \,\middle|\,\sigma_S\equiv+1\right]
 }{
 \mathbb{P}_{\beta,\Lambda,\mathbf h}
   \left[\sigma_T\equiv+1
   \,\middle|\,\sigma_S\equiv-1\right]
 }
 \times
 \frac{
 \mathbb{P}_{\beta,\Lambda,\mathbf h}
   \left[\sigma_T\equiv-1
   \,\middle|\,\sigma_S\equiv-1\right]
 }{
 \mathbb{P}_{\beta,\Lambda,\mathbf h}
   \left[\sigma_T\equiv-1
   \,\middle|\,\sigma_S\equiv+1\right]
 }.
}
The second factor is at least one, again by FKG monotonicity.
Combining this observation with the preceding lower bound gives
\eq{
 \mathbb{P}_{\beta,\Lambda,\mathbf h}
   \left[\sigma_T=\eta
   \,\middle|\,\sigma_S=\tau\right]\ge
 e^{-4K_{\beta,\Lambda,\mathbf h}(S,T)}
 \mathbb{P}_{\beta,\Lambda,\mathbf h}
   \left[\sigma_T=\eta
   \,\middle|\,\sigma_S\equiv+1\right]\,.
}

It remains to bound $K_{\beta,\Lambda,\mathbf h}(S,T)$ by
$q_{\beta,\Lambda}(S,T)$. Define a new field vector
$\widetilde{\mathbf h}$ by
\eq{
 \widetilde h_x
 =
 h_x
 -\frac{H^S_{\beta,\Lambda,\mathbf h}(S,T)}
        {\beta|S|}\mathbf 1_{\{x\in S\}}
 -\frac{H^T_{\beta,\Lambda,\mathbf h}(S,T)}
        {\beta|T|}\mathbf 1_{\{x\in T\}},
 \qquad x\in\Lambda.
}
Conditionally on $A_{S,T}$, this change subtracts precisely the
two linear terms from the effective exponent. Thus, under
$\mathbb{P}_{\beta,\Lambda,\widetilde{\mathbf h}}
 \left[\,\cdot\mid A_{S,T}\right]$,
the distribution of the two common spins is proportional to
$e^{K_{\beta,\Lambda,\mathbf h}(S,T)st}$. Both common spins have
mean zero, and their covariance is
$\tanh K_{\beta,\Lambda,\mathbf h}(S,T)$.

In \cite[Corollary 1.3]{DSS}, Ding, Song and Sun prove that
the covariance of two spins in a finite ferromagnetic Ising
model with arbitrary real fields is bounded above by their
correlation on the same graph with all fields set to zero.
We apply this result on the graph obtained from $\Lambda$
by separately contracting $S$ and $T$, retaining all bonds
with their multiplicities, and with the fields induced by
$\widetilde{\mathbf h}$. In the dimensionless convention of
that result, the fields at the two contracted vertices are
$\beta\sum_{x\in S}\widetilde h_x$ and
$\beta\sum_{x\in T}\widetilde h_x$, and the remaining fields
are $\beta\widetilde h_x$. The comparison therefore gives
\eq{
 \tanh K_{\beta,\Lambda,\mathbf h}(S,T)
 \le
 \sum_{s,t\in\{-1,1\}}st\,
 \mathbb{P}_{\beta,\Lambda,\mathbf 0}
 \left[
 \sigma_S\equiv s,\,
 \sigma_T\equiv t
 \,\middle|\,A_{S,T}
 \right]=q_{\beta,\Lambda}(S,T)\,.
}
It follows that
\eq{
 e^{-4K_{\beta,\Lambda,\mathbf h}(S,T)}
 =
 \left(
 \frac{1-\tanh K_{\beta,\Lambda,\mathbf h}(S,T)}
      {1+\tanh K_{\beta,\Lambda,\mathbf h}(S,T)}
 \right)^2
 \ge
 \left(
 \frac{1-q_{\beta,\Lambda}(S,T)}
      {1+q_{\beta,\Lambda}(S,T)}
 \right)^2.
}
This proves 
\eql{
\mathbb{P}_{\beta,\Lambda,\mathbf h}
   \left[\sigma_T=\eta
   \,\middle|\,\sigma_S=\tau\right]
 \ge \left(
 \frac{1-q_{\beta,\Lambda}(S,T)}
      {1+q_{\beta,\Lambda}(S,T)}
 \right)^2\mathbb{P}_{\beta,\Lambda,\mathbf h}
   \left[\sigma_T=\eta
   \,\middle|\,\sigma_S\equiv+1\right]\,.
}
The lemma follows from
\eq{
 \left(
 \frac{1-q_{\beta,\Lambda}(S,T)}
      {1+q_{\beta,\Lambda}(S,T)}
 \right)^2
 =
 1-\frac{4q_{\beta,\Lambda}(S,T)}
         {(1+q_{\beta,\Lambda}(S,T))^2}
 \ge 1-4q_{\beta,\Lambda}(S,T).
}
\end{proof}

We now specialize to the following sets. For an integer $\ell\ge1$, in $D_\ell:=[-2\ell,3\ell)^2\cap\ZZ^2$ contract
\eq{
 T_\ell=[0,\ell)^2\cap\ZZ^2,\qquad
 S_\ell=\bigl([-2\ell,3\ell)^2\setminus[-\ell,2\ell)^2\bigr)\cap\ZZ^2
}
separately to two vertices. Denote their zero-field correlation by
 $q_\beta(\ell)$; see \Cref{fig:block-contraction}. Concretely, $A_\ell := A_{S_\ell,T_\ell}, q_\beta(\ell) := q_{\beta,D_\ell}(S_\ell,T_\ell)$. 

\begin{figure}[htbp]
  \centering
  \begin{tikzpicture}[
    x=1.05cm, y=1.05cm,
    font=\small,
    boundary/.style={draw=black, line width=0.6pt},
    extension/.style={draw=black, line width=0.35pt},
    dimension/.style={
      draw=black, line width=0.4pt, <->,
      >={Latex[length=1.4mm,width=1mm]}
    }
  ]
    \fill[black!10] (-2,-2) rectangle (3,3);
    \fill[white]    (-1,-1) rectangle (2,2);
    \fill[black!10] (0,0)   rectangle (1,1);

    \draw[boundary] (-2,-2) rectangle (3,3);
    \draw[boundary] (-1,-1) rectangle (2,2);
    \draw[boundary] (0,0)   rectangle (1,1);

    \node[font=\large] at (0.5,0.5) {$T_\ell$};
    \node[font=\large] at (0.5,2.5) {$S_\ell$};
    \node at (0.5,1.5) {buffer};

    \foreach \x in {0,1}
      \draw[extension] (\x,-0.06) -- (\x,-0.50);
    \draw[dimension] (0,-0.38) --
      node[fill=white,inner sep=2pt] {$\ell$} (1,-0.38);

    \foreach \x in {-1,2}
      \draw[extension] (\x,-1.06) -- (\x,-1.50);
    \draw[dimension] (-1,-1.38) --
      node[fill=black!10,inner sep=2pt] {$3\ell$} (2,-1.38);

    \foreach \x in {-2,3}
      \draw[extension] (\x,-2.06) -- (\x,-2.50);
    \draw[dimension] (-2,-2.38) --
      node[fill=white,inner sep=2pt] {$5\ell$} (3,-2.38);
  \end{tikzpicture}
  \caption{Geometry in $D_\ell\equiv[-2\ell,3\ell)^2\cap\mathbb{Z}^2$.
    The shaded regions are $T_\ell=[0,\ell)^2\cap\mathbb{Z}^2$ and
    $S_\ell=\bigl([-2\ell,3\ell)^2\setminus[-\ell,2\ell)^2\bigr)
    \cap\mathbb{Z}^2$.
    Each is contracted to a separate vertex; vertices in the unshaded
    buffer remain distinct. Lattice vertices and edges are omitted.}
  \label{fig:block-contraction}
\end{figure}


Thus \cref{eq:ratio} compares the conditional distributions of
the spins in $T_\ell$ under arbitrary and all-plus configurations
on $S_\ell$, uniformly in the real field vector $\mathbf h$.

Later, after some spins have been fixed, we apply the same
argument to the remaining domain, with the incident interactions
represented as site-dependent real fields. Although this domain
need not be a box, the proof applies unchanged to any finite
subset of $\ZZ^2$ with its prescribed nearest-neighbour edges.
The resulting zero-field contracted correlation is bounded above
by $q_\beta(\ell)$: restore the deleted vertices and their
incident edges, complete any clipped blocks, and enlarge the
two contracted regions to $S_\ell$ and $T_\ell$. Each of these
operations increases the zero-field correlation by the
Griffiths--Kelly--Sherman inequalities \cite{KS}.

For blocks $B$ of side at most $\ell$, \cref{eq:gauss} gives
\eq{
 \EE_{\beta,\Lambda}\!\left[
   \exp\!\left(t\,\frac{M_B}{\ell\sqrt{\chi_\beta}}\right)
 \right]
 \le \ee^{t^2/2}
 \qquad(t\in\mathbb R).
}
Thus dividing the block magnetization by $\ell\sqrt{\chi_\beta}$
gives a variable with uniformly bounded fluctuations.

We now combine this moment bound
with \cref{lem:ratio} to obtain a zero-free criterion.

\begin{prop}\label{prop:block}
There are absolute $q_*,c>0$ such that, if 
$q_\beta(\ell)\le q_*$, then every finite free square satisfies
\eql{\label{eq:criterion}
 \calL_{\beta,\Lambda}\ge\frac{c}{\beta\ell\sqrt{\chi_\beta}}.
}
\end{prop}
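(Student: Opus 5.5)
Since $Z_{\beta,\Lambda}(h)$ is even and all its zeros lie on the imaginary axis, it suffices to show $Z_{\beta,\Lambda}(\mathrm i s)\neq0$ for $0<s\le s_*:=c/(\beta\ell\sqrt{\chi_\beta})$. The plan is to write $Z(\mathrm i s)/Z(0)=\EE_{\beta,\Lambda}[\prod_B e^{\mathrm i\beta sM_B}]$ over a partition of $\Lambda$ into blocks $B$ of side at most $\ell$, aligned with the $\ell$-grid. We then reveal the blocks one at a time in a fixed order and show that each conditional factor stays in a small disc around $1$, so that the product cannot vanish.

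\textbf{Step 1: one block at a time.} Write the ratio as a telescoping product $\prod_k Z_k/Z_{k-1}$, where $Z_k$ is the partition function with field $\mathrm i s$ switched on in the first $k$ blocks $B_1,\dots,B_k$ and zero field elsewhere. Each factor is an expectation of the form
\[
\frac{Z_k}{Z_{k-1}}=\frac{\EE_{\beta,\Lambda}\bigl[e^{\mathrm i\beta s M_{B_k}}\,\prod_{j<k}e^{\mathrm i\beta sM_{B_j}}\bigr]}{\EE_{\beta,\Lambda}\bigl[\prod_{j<k}e^{\mathrm i\beta sM_{B_j}}\bigr]}.
\]
The complex weight $\prod_{j<k}$ is not positive, so a direct FKG argument is unavailable. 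Instead we condition on the spins of the shell $S$ around $B_k$ (the translate of $S_\ell$) intersected with $\Lambda$. This factors the measure into an inside part and an outside part. The inside part is a real ferromagnetic model on the region enclosed by the shell, with real boundary fields. The outside part carries all the complex weights from blocks far from $B_k$.

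The blocks $B_j$, $j<k$, that meet the buffer region need separate handling. The order can be chosen so that the buffer blocks already revealed are few, but they are not absent. I would therefore absorb them as follows: expand $e^{\mathrm i\beta sM_{B_j}}=1+(e^{\mathrm i\beta sM_{B_j}}-1)$ and treat the second term perturbatively. This is where a finite cluster expansion enters; the connected clusters are chains of blocks at distance at most $3\ell$.

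\textbf{Step 2: control the inside.} Given a shell configuration $\tau$, the inside law is a real-field Ising model. By \cref{lem:ratio}, together with the monotonicity remark after the figure that bounds clipped and modified domains by $q_\beta(\ell)$, the law of $\sigma_{B_k}$ given $\tau$ dominates $(1-4q_*)$ times its law under the plus shell. By symmetry the same holds with the minus shell. Hence the conditional law of $B_k$ is a mixture: with weight at least $1-4q_*$ it is a fixed, $\tau$-independent law $\nu$, and the remainder has total mass at most $4q_*$.

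Under $\nu$, apply \cref{lem:gaussian-moments} with $t=\pm\beta s$. This gives $\EE|e^{\mathrm i\beta s M_B}-1|\le\beta s\,\EE|M_B|\le\beta s\,\ell\sqrt{\chi_\beta}\le c$. The Gaussian bound also controls the higher powers arising in the expansion, yielding bounds like $(C\beta s\ell\sqrt{\chi_\beta})^{n}n!^{1/2}$.

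\textbf{Step 3: sum the cluster expansion.} With activities of order $c+4q_*$ per block and polymers made of $3\ell$-adjacent blocks, the polymer gas has bounded-degree combinatorics. For $c$ and $q_*$ absolute and small, the Kotecký--Preiss criterion holds. Consequently $\log Z(\mathrm i s)/Z(0)$ is a convergent sum and $Z(\mathrm i s)\ne0$ for $s\le s_*$.

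\textbf{Main obstacle.} The hardest step is making the decoupling in Step 1 rigorous. Concretely, the $\tau$-independent part $\nu$ from \cref{lem:ratio} must be turned into a genuine factorization of the complex expectation, with remainders whose mass is bounded by $4q_*$ uniformly over the complex weights outside. My first attempt would be a Dobrushin--Shlosman-style coupling. For each block, write its conditional law as $(1-4q)\nu_B+4q\,\rho_{B,\tau}$, where $\rho_{B,\tau}$ is a positive measure. Expanding the product over blocks then yields polymers made of the blocks that selected $\rho$ together with their shells. The real fields $\mathbf h$ in \cref{lem:ratio} are exactly what absorb the interactions with the already-fixed spins.
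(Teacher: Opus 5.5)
There is a genuine gap, at the step you yourself flag as the main obstacle. Your reduction to the imaginary axis, the tiling, and the use of \cref{lem:ratio} (with the GKS monotonicity remark) to split the conditional block law into a $\tau$-independent part of weight $1-4q_*$ plus a remainder all match the paper's \cref{eq:mixture}. The missing step, however, carries essentially all of the proof, and you do not supply it.

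The telescoping of Step 1 cannot close. Even ignoring buffer blocks, conditioning on the shell writes $Z_k/Z_{k-1}$ as $(1-4q_*)\nu(g)+4q_*\,\EE[\rho_{B_k,\tau}(g)W]/\EE[W]$, where $g=e^{\mathrm i\beta sM_{B_k}}-1$ and $W=\prod_{j<k}e^{\mathrm i\beta sM_{B_j}}$. The numerator of the second term is only bounded by $\sup|g|\,\EE|W|=O(1)$. The denominator $\EE[W]=Z_{k-1}/Z_0$, by contrast, can be exponentially small in the number of blocks. You have no inductive hypothesis controlling such ratios.

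Nor can you directly \emph{expand the product over blocks} of per-block mixtures. The joint law is not a product of conditional laws given shells, because each shell is random and its own law depends on further shells.

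The paper resolves this with an Anand--Jerrum recursive perfect sampler. With probability $1-\delta$, a block is drawn from the plus-shell law given the current history, without looking at its shell. With probability $\delta$, its unpinned shell blocks are first sampled recursively, and the block is then drawn from the remainder. Independent stacks of coins and uniforms at each block make the set $Q_I$ explored from roots $I$ factor over its connected components, while the root output is exactly Gibbs distributed. This gives weights $W_z(C)=\sum_{\varnothing\ne I\subseteq C}\EE_{\rm st}[\Id_{\{Q_I=C\}}\prod_{v\in I}f_v(z;S_v^I)]$ satisfying the polymer identity. Smallness comes from the branching count $\PP_{\rm st}[|Q_I|\ge N]\le2^N\delta^{(N-|I|)/b}$: visited non-root blocks carry no field factor, so their cost is this estimate, not an activity of order $c+4q_*$ per block.

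Second, Step 2 misapplies \cref{lem:gaussian-moments}. That lemma concerns the zero-field free measure, where odd moments vanish. Under $\nu$ the shell is plus, and in any recursive version pinned neighbouring blocks act as arbitrary real fields, so $M_B$ is not centred. Consequently neither $\EE_\nu|M_B|\le\ell\sqrt{\chi_\beta}$ nor your $n!^{1/2}$ bounds follow from the lemmas available.

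The paper never takes moments under a conditional law. Cauchy--Schwarz splits each summand into $\PP_{\rm st}[Q_I=C]^{1/2}$ and $\bigl(\EE_{\rm st}\prod_{v\in I}|f_v(z;S_v^I)|^2\bigr)^{1/2}$. Exactness of the root marginal turns the latter into an expectation under $\EE_{\beta,\Lambda}$ itself. There the Gaussian bound, after summing over sign patterns, gives $(K|z|)^{|I|}$.
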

The smallness assumption will be verified in \cref{lem:block-scale},
which shows that one may take $\ell=\lceil A/m_\beta\rceil$, with
$A$ sufficiently large independently of $\beta$. Thus the required
block size diverges as $\beta\uparrow\beta_c$.

\begin{figure}[htbp]
  \centering
  \begin{tikzpicture}[
    x=0.62cm, y=0.62cm, font=\small,
    gridline/.style={draw=black!28, line width=0.25pt},
    outline/.style={draw=black, line width=0.65pt},
    polycell/.style={draw=black, fill=black!25, line width=0.55pt},
    graphedge/.style={draw=black, line width=0.55pt},
    guide/.style={draw=black!55, line width=0.3pt},
    dimension/.style={draw=black, line width=0.35pt, <->, >=latex}
  ]
    \node[anchor=south] at (3.8,8.55) {(a) Blocks in $\Lambda$};
    \node[anchor=south] at (14.4,8.55) {(b) Block indices in $V$};

    \begin{scope}[shift={(0,0.2)}]
      \fill[black!7] (7,0) rectangle (7.6,7.6);
      \fill[black!7] (0,7) rectangle (7.6,7.6);
      \fill[black!25] (0,0) rectangle (1,1);

      \foreach \k in {1,...,7} {
        \draw[gridline] (\k,0) -- (\k,7.6);
        \draw[gridline] (0,\k) -- (7.6,\k);
      }
      \draw[outline] (0,0) rectangle (7.6,7.6);
      \draw[outline] (0,0) rectangle (1,1);
      \node at (0.5,0.5) {$B_v$};

      \foreach \x in {0,1}
        \draw[guide] (\x,0) -- (\x,-0.85);
      \draw[dimension] (0,-0.65) --
        node[below,inner sep=1pt] {$\ell$} (1,-0.65);

      \foreach \y in {0,1}
        \draw[guide] (0,\y) -- (-0.85,\y);
      \draw[dimension] (-0.65,0) --
        node[left,inner sep=1pt] {$\ell$} (-0.65,1);

      \foreach \x in {7,7.6}
        \draw[guide] (\x,0) -- (\x,-0.85);
      \draw[dimension] (7,-0.65) --
        node[below,inner sep=1pt] {$\le\ell$} (7.6,-0.65);

      \foreach \y in {7,7.6}
        \draw[guide] (0,\y) -- (-0.85,\y);
      \draw[dimension] (-0.65,7) --
        node[left,inner sep=1pt] {$\le\ell$} (-0.65,7.6);
    \end{scope}

    \begin{scope}[shift={(10.4,0)}]
      \fill[black!7] (0,0) rectangle (6,5);
      \draw[gridline,step=1] (0,0) grid (8,8);
      \draw[draw=black!60,dashed,line width=0.45pt]
        (0,5) -- (6,5) -- (6,0);
      \draw[outline] (0,0) rectangle (8,8);
      \node[fill=black!7,inner sep=1pt]
        at (1.6,4.35) {$N[C_1]$};

      \foreach \x/\y in {
        0/1,1/1,1/2,3/2,
        6/2,7/2,7/1,
        2/5,3/5,3/6
      }
        \draw[polycell] (\x,\y) rectangle ++(1,1);

      \draw[graphedge]
        (0.5,1.5) -- (1.5,1.5) -- (1.5,2.5) -- (3.5,2.5);
      \draw[graphedge]
        (6.5,2.5) -- (7.5,2.5) -- (7.5,1.5);
      \draw[graphedge]
        (2.5,5.5) -- (3.5,5.5) -- (3.5,6.5);

      \foreach \x/\y in {
        0/1,1/1,1/2,3/2,
        6/2,7/2,7/1,
        2/5,3/5,3/6
      }
        \fill (\x+0.5,\y+0.5) circle[radius=0.7pt];

      \node[font=\scriptsize,fill=black!7,inner sep=0.6pt]
        at (2.5,2.8) {$2$};

      \node at (1.5,0.45) {$C_1$};
      \draw[guide] (1.5,0.75) -- (1.5,1);

      \node at (7.5,0.45) {$C_2$};
      \draw[guide] (7.5,0.75) -- (7.5,1);

      \node at (3.5,7.55) {$C_3$};
      \draw[guide] (3.5,7.25) -- (3.5,7);

      \foreach \x in {3.5,6.5}
        \draw[guide] (\x,2.4) -- (\x,1.05);
      \draw[dimension] (3.5,1.25) --
        node[fill=white,inner sep=1pt,font=\scriptsize] {$3$}
        (6.5,1.25);

      \draw[fill=white,draw=black,dashed,line width=0.65pt]
        (5,3) rectangle (6,4);
      \node at (5.5,3.5) {$C'$};

      \node at (4,-0.65)
        {$\Gamma=\{C_1,C_2,C_3\}$ is compatible};
    \end{scope}
  \end{tikzpicture}
  \caption{Blocks and compatible polymers.
    (a) Full blocks have side length $\ell$; blocks in the last row
    and column may be clipped. Lattice sites and bonds are omitted.
    (b) Each cell represents one block index. Dark cells form the
    polymers $C_1,C_2,C_3$ in the graph joining indices $v,w$ when
    $0<\|v-w\|_\infty\le2$; only selected edges are drawn.
    A polymer need not be connected by shared block sides.
    Compatibility means
    $\min_{v\in C_i,\,w\in C_j}\|v-w\|_\infty\ge3$ for $i\ne j$.
    The light region is the closed graph neighbourhood $N[C_1]$,
    which other compatible polymers must avoid.
    The dashed singleton $C'$ is disjoint from $C_1$ but incompatible
    with it. A compatible family need not cover $V$.}
  \label{fig:blocks-polymers}
\end{figure}

The proof combines ideas from recursive perfect sampling and polymer
expansions. In \cite[Section 5]{AJ}, Anand and Jerrum construct a sampler
by identifying a part of the conditional spin distribution common to
all surrounding spin configurations; recursive sampling of the
surrounding spins is needed only for the remaining part. We adapt this
construction to blocks, using \cref{lem:ratio} to supply the common part
and assigning independent stacks of random variables to each block.
The subsequent passage from the sets of blocks visited by the recursion
to a polymer expansion is inspired by Ott
\cite[Sections 2 and 5]{Ott}, who encodes spin dependence through random
sets whose separation yields factorization, and uses the resulting
expansion to prove analyticity of the Ising pressure. Here, bounds on
the size of the visited sets are combined with the Gaussian moment
bound \cref{eq:gauss} to obtain a zero-free neighbourhood on the scale
$(\beta\ell\sqrt{\chi_\beta})^{-1}$.

\begin{proof}[Proof of \cref{prop:block}]
We will show that $Z_{\beta,\Lambda}(h)\ne0$ for
$|h|\le\rho/(\beta\ell\sqrt{\chi_\beta})$, with some absolute $\rho>0$.
Tile $\Lambda$ by blocks $B_v$ of side at most $\ell$, clipping the
last row and column, and let $V\subset\ZZ^2$ be their index set; see \Cref{fig:blocks-polymers}.
Define
\eq{
 s=\ell\sqrt{\chi_\beta},\qquad
 X_v=M_{B_v}/s,\qquad
 z=\beta hs,\qquad
 f_v(z)=e^{zX_v}-1.
}
Then
\eq{
 \frac{Z_{\beta,\Lambda}(\frac{z}{\beta s})}{Z_{\beta,\Lambda}(0)}
 =
 \EE_{\beta,\Lambda}\!\left[\prod_{v\in V}(1+f_v(z))\right].
}
Thus it suffices to prove that this expectation is nonzero for
$|z|\le\rho$, uniformly in $V$.

Join block indices at $\ell^\infty$-distance at most two, obtaining
a graph of maximum degree at most $\Delta=24$. We will construct
weights $W_z(C)$, indexed by nonempty connected sets $C\subseteq V$,
such that
\eql{\label{eq:expansion}
 \frac{Z_{\beta,\Lambda}(\frac{z}{\beta s})}{Z_{\beta,\Lambda}(0)}
 =
 \sum_{\Gamma\ {\rm compatible}}\prod_{C\in\Gamma}W_z(C),
}
where compatibility means that the sets in $\Gamma$ are disjoint
and nonadjacent, and the empty family contributes one; see \Cref{fig:blocks-polymers}. The required
bound on these weights is
\eql{\label{eq:small}
 \sup_{v\in V}\sum_{C\ni v}|W_z(C)|\,2^{(\Delta+1)|C|}
 \le1
 \qquad(|z|\le\rho).
}

We first explain why these two properties imply nonvanishing.
For $U\subseteq V$, let $\Xi(U)$ be the compatible sum in
\cref{eq:expansion} restricted to sets contained in $U$, keeping
the weights fixed. Starting from $\Xi(\varnothing)=1$, we prove
inductively that, for every $v\in U$,
\eq{
 \Xi(U)\ne0,\qquad
 \left|\frac{\Xi(U)}{\Xi(U\setminus\{v\})}-1\right|\le\frac12.
}
Indeed, separating families according to the set containing $v$ gives
\eq{
 \Xi(U)=\Xi(U\setminus\{v\})
 +\sum_{\substack{C\subseteq U\\v\in C}}
    W_z(C)\,\Xi(U\setminus N[C]),
}
where $N[C]$ is the closed neighbourhood of $C$. Since
$|N[C]|\le(\Delta+1)|C|$, the induction hypothesis bounds the ratio
$\Xi(U\setminus N[C])/\Xi(U\setminus\{v\})$ in absolute value by
$2^{(\Delta+1)|C|-1}$: delete the other vertices of $N[C]$ one at
a time, using an inverse ratio of at most two at each deletion.
Consequently,
\eq{
 \left|\frac{\Xi(U)}{\Xi(U\setminus\{v\})}-1\right|
 \le\frac12\sum_{C\ni v}|W_z(C)|\,2^{(\Delta+1)|C|}
 \le\frac12.
}
This proves the induction step. It remains to construct
\cref{eq:expansion} and verify \cref{eq:small}.

The construction uses \cref{lem:ratio} to represent the dependence
between blocks by a local sampling procedure. A history $H$ specifies
the spins in a collection of whole blocks. For an unpinned block
$B_v$, let $S_{v,H}$ be the union of the unpinned blocks at index
distance exactly two from $v$; there are at most $b=16$ such blocks.
Once their spins are fixed, the conditional law in $B_v$ depends
only on pins at distance at most two.

Deleting these pinned vertices turns their incident interactions
into real fields. Apply \cref{lem:ratio} in the resulting local domain.
Restoring deleted vertices and edges, completing clipped blocks, and
enlarging the contracted regions to the full block and shell increases
the zero-field contracted correlation by GKS. It is therefore at most
$q_\beta(\ell)$. Choose $0<\delta<1/b$, to be fixed below, and take
$q_*\le\delta/4$. For every shell configuration $\tau$, we obtain
\eql{\label{eq:mixture}
 \begin{aligned}
 &\mathbb{P}_{\beta,\Lambda}
   \left[\sigma_{B_v}=\eta
   \,\middle|\,H,\sigma_{S_{v,H}}=\tau\right]
 \\
 &\qquad=
 (1-\delta)\mathbb{P}_{\beta,\Lambda}
   \left[\sigma_{B_v}=\eta
   \,\middle|\,H,\sigma_{S_{v,H}}\equiv+1\right]
 +\delta R_{\beta,\Lambda,v,H,\tau}(\eta),
 \end{aligned}
}
where $R_{\beta,\Lambda,v,H,\tau}$ is a probability distribution.
Indeed, subtracting the first term leaves a nonnegative measure of
total mass $\delta$.

The common term in \cref{eq:mixture} allows us to sample the target
without discovering its shell with probability $1-\delta$.
With the remaining probability $\delta$, recursively sample the
unpinned shell blocks in a fixed order, retaining earlier shell
outputs, and then sample the target from
$R_{\beta,\Lambda,v,H,\tau}$. An empty unpinned shell requires no
recursion. Each call retains its input history and its direct
children's outputs; temporary descendant pins are discarded when
a child returns.

Use independent stacks of coins and uniform random variables at
each block, reserving the uniform for a call's final draw before
recursing. Each call branches with probability $\delta$ and has
at most $b$ children, so $b\delta<1$ ensures almost sure termination.
To verify exactness, truncate at depth $k$ and use exact conditional
Gibbs draws there. Applying \cref{eq:mixture} backwards gives the
correct marginal at the roots. From $j$ roots, the probability of
reaching depth $k$ is at most $j(b\delta)^k$; letting $k\to\infty$
therefore proves exactness of the untruncated procedure.

For $I\subseteq V$, run the roots in a fixed order, retaining earlier
root outputs. Let $S_v^I$ be the resulting configuration in $B_v$,
and let $Q_I$ be the set of all visited block indices, including
temporary descendants. The root output has the joint Gibbs marginal.
Moreover, the connected components of $Q_I$ factor over their
independent stacks. Deleting roots in other components changes
neither a local history nor a consumed stack entry, since distinct
components are nonadjacent. Conversely, runs on nonadjacent supports
assemble into the joint run. The event of having a specified support,
together with its output, is measurable on that support: simulate
until termination or the first proposed exit.

Write $\EE_{\rm st}$ and $\mathbb{P}_{\rm st}$ for expectation and
probability over the stacks. Every component of $Q_I$ contains a root.
For nonempty connected $C\subseteq V$, define
\eq{
 W_z(C)=
 \sum_{\varnothing\ne I\subseteq C}
 \EE_{\rm st}\!\left[
   \Id_{\{Q_I=C\}}\prod_{v\in I}f_v(z;S_v^I)
 \right],
}
where $f_v(z;S_v^I)$ denotes evaluation at the sampled configuration.
Expanding $\prod_v(1+f_v)$, using exactness of the root marginal,
and grouping the connected components of $Q_I$ proves
\cref{eq:expansion}.

We now bound these weights. Cauchy--Schwarz separates each summand
into the probability of its explored support and a moment of its
field factors. For the latter, positivity of even spin correlations
and \cref{eq:gauss}, applied to $\bigcup_{v\in I}B_v$, give
\eq{
 \EE_{\beta,\Lambda}\!\left[e^{t\sum_{v\in I}|X_v|}\right]
 \le
 \sum_{\varepsilon\in\{-1,1\}^I}
 \EE_{\beta,\Lambda}\!\left[
   e^{t\sum_{v\in I}\varepsilon_vX_v}
 \right]
 \le 2^{|I|}e^{t^2|I|/2},
 \qquad t\ge0.
}
For $|z|\le1$ and $x\in\mathbb R$,
$|e^{zx}-1|^2\le |z|^2e^{4|x|}$. Taking $t=4$ above therefore yields
an absolute $K$ such that
\eql{\label{eq:moment}
 \left(
 \EE_{\beta,\Lambda}\!\left[\prod_{v\in I}|f_v(z)|^2\right]
 \right)^{1/2}
 \le(K|z|)^{|I|},
 \qquad |z|\le1.
}

For the explored support, if at least $N$ calls occur, at least
$(N-|I|)/b$ of their first $N$ coins must branch. There are at most
$2^N$ coin patterns, and the number of distinct visited blocks is
at most the number of calls. Thus
\eql{\label{eq:tail}
 \mathbb{P}_{\rm st}\left[|Q_I|\ge N\right]
 \le 2^N\delta^{(N-|I|)/b}
 =e^{-\lambda N+\eta|I|},
}
where $\eta=b^{-1}\log(1/\delta)$ and $\lambda=\eta-\log2$.
By Cauchy--Schwarz, \cref{eq:moment,eq:tail}, and exactness of the
root marginal,
\eq{
 \left|
 \EE_{\rm st}\!\left[
   \Id_{\{Q_I=C\}}\prod_{v\in I}f_v(z;S_v^I)
 \right]
 \right|
 \le e^{-\lambda|C|/2}
      (Ke^{\eta/2}|z|)^{|I|}.
}
Summing over the nonempty subsets $I\subseteq C$ gives
\eql{\label{eq:activities}
 |W_z(C)|
 \le e^{-\lambda|C|/2}
 \left[(1+Ke^{\eta/2}|z|)^{|C|}-1\right].
}

There are at most $\Delta^{2r-2}$ connected $r$-sets containing
a fixed block, by encoding a spanning tree through its depth-first
walk. Choose $\delta$ sufficiently small that
\eq{
 \Delta^2\,2^{\Delta+1}e^{-\lambda/2}<1.
}
With this $\delta$ fixed, \cref{eq:activities} bounds the sum in
\cref{eq:small} by a convergent geometric series for sufficiently
small $|z|$, and that bound tends to zero as $z\to0$. Hence
\cref{eq:small} holds for $|z|\le\rho$, with an absolute
$0<\rho\le1$.

The induction at the beginning of the proof now gives
$\Xi(V)\ne0$. By \cref{eq:expansion} and $z=\beta hs$, this proves
$Z_{\beta,\Lambda}(h)\ne0$ for
$|h|\le\rho/(\beta\ell\sqrt{\chi_\beta})$, and hence
\cref{eq:criterion} with $c=\rho$.
\end{proof}

\section{The upper bound}\label{sec:lower}

We begin with an upper bound to completement \cref{lem:planar}.
\begin{lem}\label{lem:twopoint}
Uniformly in finite $\Lambda\subset\ZZ^2$, $x,y\in\Lambda$ and $\beta<\beta_c$ sufficiently close to $\beta_c$,
\eql{\label{eq:decay}
 0\le\EE_{\beta,\Lambda}[\sigma_x\sigma_y]
 \le C(1+|x-y|_\infty)^{-1/4}e^{-m_\beta|x-y|_\infty/2}.
}
Consequently,
\eql{\label{eq:sums}
 \sup_{x\in\Lambda}\sum_{y\in\Lambda}\EE_{\beta,\Lambda}[\sigma_x\sigma_y]
 \le\chi_\beta\le C m_\beta^{-7/4}.
}
\end{lem}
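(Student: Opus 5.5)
First, reduce to infinite volume. Nonnegativity of $\EE_{\beta,\Lambda}[\sigma_x\sigma_y]$ is Griffiths' first inequality. As in the proof of \cref{lem:gaussian-moments}, GKS monotonicity in the couplings gives $\EE_{\beta,\Lambda}[\sigma_x\sigma_y]\le\EE_\beta[\sigma_x\sigma_y]$. It therefore suffices to prove the pointwise bound for the translation-invariant infinite-volume two-point function $G_\beta(x)=\EE_\beta[\sigma_0\sigma_x]$, uniformly in $x$.

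To prove this bound, I would combine three ingredients. The first is monotonicity in $\beta$: GKS gives $G_\beta(x)\le G_{\beta_c}(x)\le C(1+|x|)^{-1/4}$, using Wu's critical asymptotic along axes. This extends to all directions by the known critical bound $G_{\beta_c}(x)\asymp|x|^{-1/4}$, or alternatively by the Messager--Miracle-Solé monotonicity $G(x)\le G((|x|_\infty,0))$. The second is the axial exponential bound. Simon's inequality (or Lieb's improvement) together with subadditivity gives $G_\beta((n,0))\le e^{-m_\beta n}$ exactly, since $-\log G_\beta((n,0))$ is subadditive by GKS-type arguments. Messager--Miracle-Solé then transfers this to $G_\beta(x)\le G_\beta((|x|_\infty,0))\le e^{-m_\beta|x|_\infty}$. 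The third is the interpolation. Taking the geometric mean $\min(a,b)\le a^{1/2}b^{1/2}$ gives only $(1+|x|)^{-1/8}e^{-m|x|/2}$, which loses half of the power. To keep the full $-1/4$ exponent with rate $m_\beta/2$, I would instead use a Lieb--Simon/block-spin argument. Split $x$ at distance $r=|x|_\infty$ across a box of radius $r/2$ around $0$: Simon--Lieb gives $G_\beta(x)\le\sum_{u\in\partial}G_\beta(u)\,\beta\,G_\beta(x-u)$. Then use the critical power bound on the first factor and the exponential bound on the second. For $r\le A/m_\beta$ the power law alone suffices, since $e^{-mr/2}\ge e^{-A/2}$. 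For $r\ge A/m_\beta$ the surplus in the exponent lets me absorb polynomial factors: $(1+r)^{-1/4}e^{-mr/2}$ dominates $e^{-mr}$ times any constant multiple of $(mr)^{1/4}(1+r)^{-1/4}$, since $e^{-mr/2}(mr)^{1/4}$ is bounded. Thus in that regime $G\le e^{-mr}\le C(1+r)^{-1/4}e^{-mr/2}\cdot (m r)^{1/4}e^{-mr/2}\cdot m^{-1/4}(1+r)^{1/4}r^{-1/4}$. Because $m^{-1/4}$ blows up, this step needs care. The clean route is the near-critical scaling bound $G_\beta(x)\le C|x|^{-1/4}e^{-cm_\beta|x|}$ from the planar literature (Wu--McCoy--Tracy--Barouch asymptotics), which I would cite, adjusting the constant in front of $m_\beta$ by GKS. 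I expect this uniform near-critical interpolation to be the main obstacle, and I would first try to derive it from the exact form factor expansion in \cite{WMTB}.

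For \cref{eq:sums}, the first inequality follows directly from the domination $\EE_{\beta,\Lambda}[\sigma_x\sigma_y]\le G_\beta(y-x)$ and nonnegativity, by summing over $y\in\ZZ^2$. For the second, sum \cref{eq:decay} over $x\in\ZZ^2$ by shells $|x|_\infty=r$, which contain at most $8r+1$ points:
\[
\chi_\beta\le C\sum_{r\ge0}(1+r)^{3/4}e^{-m_\beta r/2}\le C m_\beta^{-7/4},
\]
comparing the sum with $\int_0^\infty(1+r)^{3/4}e^{-m_\beta r/2}\,dr$.
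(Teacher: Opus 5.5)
\textbf{There is a genuine gap at the central step.} Your reduction to infinite volume, the Messager--Miracle-Sol\'e passage to the axis, the critical input from Wu with monotonicity in $\beta$, and the shell summation for \cref{eq:sums} all agree with the paper. The gap is the one step that carries the content of \cref{eq:decay}: combining the critical power law with exponential decay at rate $m_\beta/2$, with a constant uniform in $\beta$.

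You correctly note that $G_\beta((n,0))\le C(1+n)^{-1/4}$ and $G_\beta((n,0))\le e^{-m_\beta n}$ give only $(1+n)^{-1/8}e^{-m_\beta n/2}$ by taking geometric means. After summation over shells this yields $m_\beta^{-15/8}$, not $m_\beta^{-7/4}$.

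Your Simon--Lieb sketch does not repair this. With the bounds you propose for the two factors, summing over the $\asymp r$ boundary edges of a box of radius $r/2$ gives only $C r^{3/4}e^{-m_\beta r/2}$.

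The fallback is not a proof either. You would cite a near-critical bound $C|x|^{-1/4}e^{-cm_\beta|x|}$ from \cite{WMTB} and then adjust ``the constant in front of $m_\beta$ by GKS.'' But \cite{WMTB} provides asymptotics, not bounds uniform in both $x$ and $\beta$. GKS also gives no way to raise the rate $c$ to $1/2$: interpolating with $e^{-m_\beta n}$ again sacrifices part of the power $1/4$. A fixed $c>0$ would suffice for \cref{eq:sums}, but you would still need a precise uniform source, and it would not give \cref{eq:decay} as stated.

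\textbf{The missing idea is the positive spectral representation of the axial correlation.} The transfer matrix (reflection positivity) gives
$\EE_\beta[\sigma_0\sigma_{(j,0)}]=\int_{m_\beta}^{\infty}e^{-js}\,d\mu_\beta(s)$ with $\mu_\beta\ge0$, and this is what the paper uses.

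Since $s\ge m_\beta$ on the support, $e^{-ns}\le e^{-m_\beta(n-k)}e^{-ks}$ for $k\le n$. Hence $G_\beta((n,0))\le e^{-m_\beta(n-k)}G_\beta((k,0))$.

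Take $k=\lfloor n/2\rfloor$ and bound $G_\beta((k,0))\le G_{\beta_c}((k,0))\le C(1+n)^{-1/4}$ using Wu and monotonicity in $\beta$. This gives \cref{eq:decay} with a uniform constant and exactly rate $m_\beta/2$.

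Equivalently, what you need is log-convexity of $n\mapsto G_\beta((n,0))$. By Cauchy--Schwarz in the spectral representation, successive ratios then increase to $e^{-m_\beta}$, so every ratio is at most $e^{-m_\beta}$.

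The supermultiplicativity $G(n+k)\ge G(n)G(k)$ that you derive from GKS points the wrong way for this splitting. It yields the global bound $G\le e^{-m_\beta n}$, but not the ratio bound $G(n)/G(k)\le e^{-m_\beta(n-k)}$ that lets you keep the full $(1+n)^{-1/4}$.
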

\begin{proof}
The Griffiths--Kelly--Sherman inequalities \cite{KS} give nonnegativity of spin-product expectations and monotonicity in each ferromagnetic coupling when the fields are nonnegative. At zero field, adding the missing lattice couplings with strength $\beta$ thus gives $0\le\EE_{\beta,\Lambda}[\sigma_x\sigma_y]\le\EE_\beta[\sigma_x\sigma_y]$. Messager--Miracle-Sol\'e monotonicity \cite{MM,Sch} compares a general displacement with an axial one: $\EE_\beta[\sigma_0\sigma_z]\le\EE_\beta[\sigma_0\sigma_{(n,0)}]$ for $n=|z|_\infty$. With $z=y-x$, translation invariance bounds the finite-volume correlation by this row correlation at $n=|x-y|_\infty$. The transfer-matrix decomposition in \cite[Section V.B, formulas (5.20)--(5.22)]{SML} expresses the row correlation as a sum of nonnegative spectral weights times powers of eigenvalue ratios. For the isotropic choice of both couplings equal to $\beta$, its infinite-volume form is $\EE_\beta[\sigma_0\sigma_{(j,0)}]=\int_{m_\beta}^{\infty}e^{-js}\,d\mu_\beta(s)$, where $\mu_\beta$ is a positive measure and $s$ is minus the logarithm of the eigenvalue ratio. Comparing the integrands at $j=n$ and $j=\lfloor n/2\rfloor$ gives
\eql{\label{eq:spectral}
 \EE_\beta[\sigma_0\sigma_{(n,0)}]
 &\le e^{-m_\beta(n-\lfloor n/2\rfloor)}
 \EE_\beta[\sigma_0\sigma_{(\lfloor n/2\rfloor,0)}]
 \le C e^{-m_\beta n/2}(1+n)^{-1/4}.
}
For the last inequality, Wu's critical row asymptotic \cite{Wu} gives $\EE_{\beta_c}[\sigma_0\sigma_{(j,0)}]\sim a_0j^{-1/4}$ as $j\to\infty$, with $a_0>0$. Apply this at $j=\lfloor n/2\rfloor$ and use monotonicity in $\beta$, increasing the constant to cover bounded $j$. Summation over square shells proves \cref{eq:sums}.
\end{proof}

\begin{lem}\label{lem:block-scale}
There are $c,C>0$ such that, for $\beta<\beta_c$ sufficiently close to
$\beta_c$ and $A\ge1$,
\eql{\label{eq:block-scale}
 q_\beta(\lceil A/m_\beta\rceil)\le Ce^{-cA}.
}
\end{lem}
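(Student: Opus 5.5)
Throughout, write $\ell=\lceil A/m_\beta\rceil$. The quantity $q_\beta(\ell)$ is the zero-field correlation $\EE[\sigma_S\sigma_T\mid A_\ell]$ in $D_\ell$, where $S_\ell$ and $T_\ell$ have each been contracted to a single vertex. The plan is to turn this into a connection probability in the random-cluster (FK) representation and then bound that probability by a near-critical crossing estimate.

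\textbf{Step 1: reduction to an FK connection.} Contracting $S_\ell$ means taking all internal edges of $S_\ell$ at infinite coupling, and likewise for $T_\ell$. In the Edwards--Sokal coupling on the contracted graph $G_\ell$, we have $q_\beta(\ell)=\phi_{G_\ell}[S\leftrightarrow T]$, where $\phi_{G_\ell}$ is the FK measure with $q=2$ and $p=1-e^{-2\beta}$. Any open path from $T$ to $S$ must cross the buffer annulus $[-\ell,2\ell)^2\setminus[0,\ell)^2$, which has inner side $\ell$ and outer side $3\ell$.

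By monotonicity of the FK measure in boundary conditions, I will dominate $\phi_{G_\ell}$ on the buffer by the FK measure on $\ZZ^2$ in which the shell $S_\ell$ and the block $T_\ell$ are wired. This is at most the measure on the annulus with wired boundary conditions on both boundaries. So it suffices to bound
\eq{\phi^{1}_{\mathrm{Ann}(\ell,3\ell)}\bigl[\partial_{\rm in}\leftrightarrow\partial_{\rm out}\bigr],}
the wired-annulus crossing probability at scale $\ell$.

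\textbf{Step 2: one subscale annulus.} Split the buffer into $k\asymp A$ disjoint concentric annuli, each of width about $\ell/k\asymp m_\beta^{-1}$, for example $\ell/A$ up to constants. Exploring from outside inwards, the domain Markov property and the comparison of boundary conditions bound the total crossing probability by a product of $k$ factors. Each factor is the probability of crossing one annulus of width $w\asymp m_\beta^{-1}$ with wired conditions on both sides.

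The key claim is that this probability is at most $1-\epsilon$, for some $\epsilon>0$ uniform in $\beta$ near $\beta_c$. I see two routes.

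\emph{Preferred route (duality and RSW).} Use the near-critical RSW theory of Duminil-Copin--Garban--Pete. On scales comparable to the correlation length, the dual model surrounds a square annulus of aspect ratio three with probability at least $\epsilon$, uniformly in boundary conditions. Since the subcritical dual is supercritical, this dual circuit only becomes more likely, so the bound holds uniformly on the range $w\le Cm_\beta^{-1}$ needed here. A dual circuit blocks the primal crossing, which gives the factor $1-\epsilon$.

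\emph{Alternative route (two-point bound).} Use \cref{lem:twopoint}. The wired-to-wired crossing of one annulus is bounded by a union bound over boundary sites, using the FK two-point function $\EE[\sigma_x\sigma_y]$. This costs a polynomial factor in $\ell$, but the exponential factor $e^{-m_\beta\ell/2}=e^{-A/2}$ from \eqref{eq:decay} absorbs it. The difficulty is that the wired boundary conditions spoil the direct use of the free two-point function, so this route is less clean.

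The product over the $k\asymp A$ annuli then gives $(1-\epsilon)^{cA}\le Ce^{-cA}$, which is \eqref{eq:block-scale}.

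\textbf{Main obstacle.} The hard step is the uniform crossing bound on one annulus with wired conditions on both boundaries, at scales of order $m_\beta^{-1}$. At criticality, RSW with wired boundaries is exactly the Duminil-Copin--Hongler--Nolin bound. Getting the same estimate slightly off criticality requires the near-critical stability of crossing probabilities, which holds below the characteristic length $L(\beta)\asymp m_\beta^{-1}$. If that citation is awkward, I would instead compare with $\beta_c$ directly: crossing probabilities are monotone in $\beta$ (FKG in $p$), and $\beta<\beta_c$ means $p<p_c$. So it suffices to use the critical wired RSW bound in each annulus, which requires no near-critical theory at all. Exponential decay in $A$ comes simply from the number $\asymp A$ of disjoint annuli.

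That last monotonicity argument settles the obstacle, so the proof reduces to \textbf{Step 1} together with the critical RSW bound with boundary conditions.
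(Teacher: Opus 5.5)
\textbf{Genuine gap in Step 2.} Your Step 1 is fine. On the contracted graph, Edwards--Sokal gives $q_\beta(\ell)=\phi_{G_\ell}[S\leftrightarrow T]$, and dominating the buffer edges by wired conditions is legitimate for this increasing event. The problem is the geometry of Step 2.

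The buffer $[-\ell,2\ell)^2\setminus[0,\ell)^2$ is a single annulus of fixed aspect ratio $3$ at scale $\ell\asymp A/m_\beta$. Cutting it into $k\asymp A$ concentric annuli of width $w\asymp m_\beta^{-1}$ produces \emph{thin} annuli: each has inner side of order $A/m_\beta$ and width of order $m_\beta^{-1}$. These are not ``square annuli of aspect ratio three''. A dual circuit in one of them must surround a loop about $A$ widths long. At scales comparable to the correlation length, each of the $\asymp A$ disjoint $w$-squares along such an annulus is crossed radially with probability bounded below, uniformly in boundary conditions. So the dual circuit has probability $e^{-O(A)}$, each factor in your product is $1-e^{-O(A)}$ rather than $1-\epsilon$, and the product gives nothing. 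Concentric annuli of bounded aspect ratio between scales $\ell$ and $3\ell$ number only $O(1)$. Iterating RSW over $\log K$ scales gives only polynomial decay $K^{-c}$. This is exactly how the paper uses the Duminil-Copin--Hongler--Nolin bound: only to make the separately wired connection from $\Lambda_{2r}$ to $\partial\Lambda_{Kr}$ at most $1/10$ for one fixed $K$, never as the source of decay in $A$.

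\textbf{The fallback and the two-point route also fail.} Your fallback via monotonicity in $p$ cannot work. Any bound using only critical RSW would equally bound $q_{\beta_c}(\ell)$. At $\beta_c$, however, the buffer is crossed by a primal path with probability bounded below uniformly in $\ell$, obtained by gluing boundedly many fixed-aspect-ratio rectangle crossings with FKG. Hence $q_{\beta_c}(\ell)\ge c_0>0$, and no argument in which $m_\beta$ enters only through the choice of $k$ can yield $e^{-cA}$. The two-point route fails for a different reason. Its prefactor is polynomial in $\ell=\lceil A/m_\beta\rceil$, so you get roughly $(A/m_\beta)^{C}e^{-A/2}$, which is not uniform as $\beta\uparrow\beta_c$. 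In addition, the wired regions sit directly against the buffer.

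\textbf{What is missing.} You need an off-critical estimate: at some scale $r\le C/m_\beta$, crossing probabilities are \emph{small}, not merely bounded away from one, even under distant wired conditions. You then need a coarse-graining sum, and smallness is essential because the number of coarse paths grows exponentially. The paper proceeds as follows.
\begin{enumerate}
\item It takes $n$ to be the first scale with $\phi_p(\mathcal C(r,r))\le\epsilon$.
\item It proves $n\le C_\epsilon/m_\beta$ by chaining likely crossings, which gives $\EE_\beta[\sigma_0\sigma_{(2Nk,0)}]\ge b[\epsilon\psi_2(\epsilon)]^{2N+2}$. This is where $m_\beta$ genuinely enters.
\item It transfers the smallness to wired annuli at scale $r=2n$, using $\phi^{\rm wired}_{\Lambda_{Kr},p}\le2\phi_p$ (from \cref{lem:ratio}) and the K\"ohler-Schindler--Tassion crossing theorem.
\item It beats the entropy $100^N$ of $r$-grid paths with $N\gtrsim\ell/r\gtrsim A$ records by the factor $a_\epsilon^{N/D^2}$ coming from well-separated records.
\end{enumerate}
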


The proof is modeled on the finite-size and coarse-graining argument
of Duminil-Copin and Manolescu
\cite[Section~2.4, particularly the proof of Proposition~2.13]{DCM},
which compares the crossing scale with the exponential correlation
length.
We adapt their method of extracting separated boxes from long
connections and summing over coarse paths to the two separately wired
terminal sets defining $q_\beta(\ell)$.

\begin{proof}[Proof of \cref{lem:block-scale}]
Write $m=m_\beta$. Let $\phi_p$ be the infinite-volume FK--Ising law,
$p=1-e^{-2\beta}$. The critical FK--Ising RSW theorem \cite[Theorem 1.1]{DHN} gives $c_\rho\le\phi^\xi_{p_c,2,R}(\mathcal C_R)\le1-c_\rho$ for rectangles of fixed aspect ratio $\rho$, uniformly in their size and boundary condition $\xi$. Here $\mathcal C_R$ is a crossing between the chosen opposite sides, and the cluster weight is $2$. At $p_c=1-e^{-2\beta_c}$, the dual crossing bounds apply to the fixed-aspect-ratio rectangles forming each annulus. Uniformity in $\xi$ allows conditioning on the exterior, so there is a uniformly positive conditional probability of a separating dual circuit in each of order $\log K$ annuli between $\Lambda_{2r}$ and $\partial\Lambda_{Kr}$. Their separately wired connection probability
is therefore at most $CK^{-c}$, also for $\beta\le\beta_c$ by
monotonicity. Choose a fixed $K$ making this at most $1/10$.
Averaging \cref{lem:ratio} over infinite-volume boundary spins and
using the Edwards--Sokal coupling gives
\eql{\label{eq:wired}
 \phi^{\rm wired}_{\Lambda_{Kr},p}(E)\le2\phi_p(E)
}
for every event depending on edges in $\Lambda_{2r}$: the conditional
law of those edges given their endpoint spins is the same in both
measures.

Write $\mathcal C(a,b)$ for a horizontal crossing of
$[-a,a]\times[-b,b]$. K\"ohler-Schindler and Tassion's crossing theorem \cite[Theorem 1]{KST} applies to every positively associated percolation law invariant under square-lattice symmetries. It gives $\PP(\mathcal C(\rho r,r))\ge\psi_\rho(\PP(\mathcal C(r,\rho r)))$ for $\rho\ge1$, where $\psi_\rho:[0,1]\to[0,1]$ is an increasing homeomorphism depending only on $\rho$. The FK law $\phi_p$ has the required symmetry and association properties, so taking $\PP=\phi_p$ yields $\phi_p(\mathcal C(\rho r,r))\ge\psi_\rho(\phi_p(\mathcal C(r,\rho r)))$. We will use the aspect ratios $\rho=2$ and $\rho=4$ below.
For a small fixed $\epsilon>0$, let
$n=\min\{r\ge1:\phi_p(\mathcal C(r,r))\le\epsilon\}$.
This is finite by subcritical exponential decay. At
$k=\lfloor n/2\rfloor\ge1$, square crossings exceed $\epsilon$ and
$4k$-by-$2k$ horizontal crossings exceed $\psi_2(\epsilon)$.
Chaining overlapping rectangles, adding vertical crossings in their
square overlaps, and attaching their endpoints by finite energy gives
\eq{
 \EE_\beta[\sigma_0\sigma_{(2Nk,0)}]
 \ge b_{\beta,k}[\epsilon\psi_2(\epsilon)]^{2N+2},
 \qquad b_{\beta,k}>0.
}
The prefactor is independent of $N$. Taking exponential rates yields
$n\le C_\epsilon/m$; $n=1$ is covered by $m\le1$ near criticality.

A connection from $\Lambda_{2n}$ to $\partial\Lambda_{4n}$ forces a
short crossing of one of four $2n$-by-$8n$ rectangles. A long crossing
of the rotated rectangle forces a square crossing at scale $n$.
Thus, with $r=2n$, \cref{eq:wired} and the crossing theorem give
\eq{
 \phi^{\rm wired}_{\Lambda_{Kr},p}
 (\Lambda_r\leftrightarrow\partial\Lambda_{2r})
 \le8\psi_4^{-1}(\epsilon)=:a_\epsilon,
 \qquad r\le C_\epsilon/m.
}
Fix an integer $D>2K+4$, and then choose $\epsilon$ so small that
$100a_\epsilon^{1/D^2}<1/2$.
An ordinary open path between the two separately wired terminal sets
in $q_\beta(\ell)$ traverses a distance of order $\ell$.
Trim it between its last exit from the inner terminal's $(K+2)r$
neighborhood and its first entrance into the outer terminal's such
neighborhood. Record cells of an $r$-grid whenever it exits the
$2r$-box about the preceding cell's centre, and erase loops. This gives
$N\ge c\ell/r-C$ records, with at most $100$ successors per record
and $O(1+\ell/r)$ possible starts. Each nonterminal record forces the
last displayed event. Coloring grid indices modulo $D$ selects at
least $N/D^2-1$ records with disjoint $Kr$-containers avoiding both
terminal sets. Conditional on their exteriors, the FK Markov property
bounds each event by $a_\epsilon$, using wired domination. Summing
record sequences gives
\eq{
 q_\beta(\ell)\le C(1+\ell/r)
       \sum_{N\ge c\ell/r-C}100^N a_\epsilon^{N/D^2-1}
 \le Ce^{-c\ell/r}.
}
Increasing $C$ covers bounded $\ell/r$. This proves \cref{eq:block-scale} at $\ell=\lceil A/m\rceil$.

\end{proof}

We are finally ready for the
\begin{proof}[Proof of the upper and derivative bounds]
Choose $A$ so large that $Ce^{-cA}\le q_*$ in \cref{lem:block-scale}, and set
$\ell=\lceil A/m_\beta\rceil$. Then \cref{prop:block,eq:sums} give, uniformly
in finite free squares,
\eql{\label{eq:finite-gap}
 \calL_{\beta,\Lambda}\ge\frac{c}{\beta\ell\sqrt{\chi_\beta}}
 \ge c m_\beta^{15/8}.
}
Taking the infimum over $\Lambda_L$ proves the upper bound in \cref{eq:main}. 

Next, by \cref{eq:product}, for every $k\ge1$,
\eql{\label{eq:derivative-bound}
 \frac{|\partial_h^{2k}\log Z_{\beta,\Lambda}(0)|}{(2k-1)!}
 =2\sum_{j\ge1}\calL_{\beta,\Lambda,j}^{-2k}
 \le \calL_{\beta,\Lambda}^{2-2k}\partial_h^2\log Z_{\beta,\Lambda}(0).
}
Since $\partial_h^2\log Z_{\beta,\Lambda}(0)=\beta^2\Var_{\beta,\Lambda}(M_\Lambda)\le C|\Lambda|m_\beta^{-7/4}$ by \cref{eq:sums}, the finite-volume gap bound yields \cref{eq:cumulant-main}: the power of $m_\beta$ is $-7/4+(15/8)(2-2k)=2-15k/4$.
\end{proof}

\appendix

\section{The CJN auxiliary inequality and a conditional shorter proof}\label{sec:cjn}
The Ursell-function monotonicity asserted in \cite[Theorem 1, formula (7)]{CJN} is $(-1)^{k-1}\partial_{J_{uv}}u_{2k}\ge0$ for zero-field ferromagnetic Ising models, including repeated marked sites. Applied to the spins and couplings below, it would give a shorter proof of the upper bound. Its proof uses the constrained sign inequality \cite[Proposition 2, formula (43)]{CJN}: a signed partition sum $R$, defined below, is asserted to satisfy $(-1)^{k-1}R\ge0$ whenever no restricted edge is a self-loop or joins the distinguished vertices. As far as we can tell, the following construction at $k=2$ is a counterexample to this auxiliary assertion. The corresponding unconstrained statement \cite[formula (42)]{CJN} is $(-1)^{k-1}R(\mathcal G)\ge0$, with no restrictions on the edge classes. The counterexample concerns the constrained sum and does not disprove either the unconstrained inequality or the monotonicity theorem.

\paragraph{The constrained sign inequality.}
The current-graph partition sum of \cite[Section 3.1]{CJN} is $R(\mathcal G)$, obtained by assigning weight $(-1)^{n-1}(n-1)!$ to each admissible partition into $k+1$ edge classes whose marked vertices form $n$ nonempty even sets. Specializing to $k=2$ and writing $(u,v)$ for their distinguished vertices $(u_0,v_0)$ gives the following definitions.
Let $\mathcal J=\{j_1,j_2,j_3,j_4\}$ be the marked vertices, with
$u=j_1$ and $v\notin\mathcal J$. For a current graph $\mathcal G=(V,\mathcal E)$,
write $\partial E'$ for the odd-degree vertices of an edge set $E'$.
An admissible partition consists of
$\mathcal E=E_1\sqcup E_2\sqcup E_3$, a number $n\in\{1,2\}$ and a
partition $\{P_1,\ldots,P_n\}$ of $\mathcal J$ into nonempty even sets,
such that
\eq{
 \partial E_i=P_i\quad(i<n),\qquad
 \partial E_n=P_n\mathbin{\triangle}\{u,v\},\qquad
 \partial E_i=\varnothing\quad(i>n),
}
and $u$ and $v$ are disconnected in $(V,E_n\cup E_{n+1})$.
Each such partition contributes $(-1)^{n-1}(n-1)!$ to $R(\mathcal G)$.
A restriction $\{e,f\}$ requires $e$ and $f$ to belong to different
$E_i$'s. The constrained inequality, formula (43) in \cite{CJN}, asserts that $(-1)^{k-1}R(\mathcal G;\{e_1,f_1\},\ldots,\{e_s,f_s\})\ge0$ when each pair requires different edge classes and no restricted edge is a self-loop or joins the distinguished vertices. For $k=2$ and $(u_0,v_0)=(u,v)$, it therefore predicts a nonpositive restricted sum.

Take two additional vertices $a,b$, distinct from each other and from
$u,v,j_2,j_3,j_4$, and let
\eql{\label{eq:cjn-graph}
 \mathcal E=\{uj_2,uj_3,vj_4,ua,av,ub,bv\}.
}
The underlying graph also contains the distinguished edge $uv$;
its current multiplicity is zero, while every edge in
\cref{eq:cjn-graph} has multiplicity one. Thus
$\partial\mathcal E=\{j_2,j_3,j_4,v\}
=\mathcal J\mathbin{\triangle}\{u,v\}$, as required.
Impose the two restrictions $\{ua,uj_2\}$ and $\{ub,uj_3\}$.
They satisfy all the hypotheses of formula (43).

Since $a,b$ are unmarked vertices of degree two, each of the paths
$uav$ and $ubv$ lies entirely in one $E_i$.
If $n=1$, the three leaf edges belong to $E_1$, and the disconnection
condition forces both paths into $E_3$. This gives the admissible partition
\eql{\label{eq:cjn-partition}
 E_1=\{uj_2,uj_3,vj_4\},\qquad E_2=\varnothing,\qquad
 E_3=\{ua,av,ub,bv\}.
}
If $n=2$, disconnection in $E_2\cup E_3$ forces both paths into $E_1$.
The restrictions then put $uj_2,uj_3$ in $E_2$; parity at $v$ also
puts $vj_4$ in $E_2$. Consequently $\partial E_1=\varnothing$,
contrary to $\partial E_1=P_1\ne\varnothing$.
Thus \cref{eq:cjn-partition} is the only constrained partition, and
\eql{\label{eq:cjn-counterexample}
 R(\mathcal G;\{ua,uj_2\},\{ub,uj_3\})=1,\qquad
 (-1)^{2-1}R(\mathcal G;\{ua,uj_2\},\{ub,uj_3\})=-1.
}
Without the restrictions there are three partitions with $n=2$ and
one with $n=1$, so $R(\mathcal G)=-2$; the unconstrained sign is correct
for this example.

The reduction in \cite{CJN}, formula (55), separates the restricted partition sum according to whether two incident edges belong to the same edge class. Formula (56) then claims the required sign for the same-class term by merging those edges and invoking induction. Here the selected edges are $ua,av$, and this is where the induction fails.
Replacing $ua,av$ by a single edge $uv$, when they belong to the same
subgraph, turns the restriction $\{ua,uj_2\}$ into $\{uv,uj_2\}$.
The reduced restriction contains the distinguished edge and is therefore
excluded from the induction hypothesis. The excluded example in \cite[Remark 3, formulas
(44)--(45)]{CJN} has two parallel edges $e_1,e_2$ between the distinguished vertices, with restrictions $\{e_1,j_1j_2\}$ and $\{e_2,j_1j_3\}$ and signed sum $-1$. Suppressing $a$ and $b$ turns our paths $uav,ubv$ into $e_1,e_2$, respectively; with $j_1=u$, this recovers precisely that example. Subdividing the two edges makes the original restrictions admissible without changing their negative signed sum.

\paragraph{The shorter proof conditional on Ursell monotonicity.}
Consider any finite ferromagnetic Ising graph at zero field, with
couplings $J_{uv}\ge0$. For a set $I$ of labels, whose sites $x_i$ may
coincide, write $u(I)=u_{|I|}(\sigma_{x_i}:i\in I)$ for the joint
Ursell function. Its definition is
\eq{
 u(I)=\left.\left(\prod_{i\in I}\partial_{t_i}\right)
 \log\EE_\Lambda\!\left[e^{\sum_{i\in I}t_i\sigma_{x_i}}\right]
 \right|_{t=0}.
}
The sign theorem of \cite{Shl} states that $(-1)^{r-1}u_{2r}\ge0$ for even Ursell functions of a zero-field ferromagnetic Ising model. With the generating-function convention above and spin list $(\sigma_{x_i}:i\in I)$, this is the sign rule for $u(I)$ when $|I|=2r$. The additional monotonicity asserted in \cite{CJN} says that the signed even Ursell function is nondecreasing in each ferromagnetic coupling, including when marked sites coincide. Written with their $k=r$, marked sites $j_i=x_i$ and differentiated edge $u_0v_0=uv$, this is the assumption
\eql{\label{eq:cjn-conditional}
 (-1)^{r-1}\partial_{J_{uv}}
 u_{2r}(\sigma_{x_1},\ldots,\sigma_{x_{2r}})\ge0
 \qquad(r\ge1).
}
For an even label set $I$ with $|I|\ge4$ and distinct labels $a,b\in I$,
coupling differentiation and the local-field product rule give
\eql{\label{eq:conditional-recursion}
 \partial_{J_{x_ax_b}}u(I\setminus\{a,b\})
 =u(I)+\sum_{\substack{B\subset I:\ a\in B,\ b\notin B\\ |B|\ \mathrm{even}}}
 u(B)u(I\setminus B).
}
If necessary, add the edge $\{x_a,x_b\}$ with coupling zero.
For $|I|=2k$, multiply \cref{eq:conditional-recursion} by $(-1)^{k-2}$.
The sign rule and \cref{eq:cjn-conditional} imply
\eql{\label{eq:conditional-split}
 |u(I)|\le
 \sum_{\substack{B\subset I:\ a\in B,\ b\notin B\\ |B|\ \mathrm{even}}}
 |u(B)|\,|u(I\setminus B)|.
}
When $x_a=x_b$, perturb instead by $t\sigma_{x_a}^2=t$.
The left side of \cref{eq:conditional-recursion} is then zero and
\cref{eq:conditional-split} holds with equality.

Put $G_{ij}=\EE_\Lambda[\sigma_{x_i}\sigma_{x_j}]$.
Induction on $|I|$ gives
\eql{\label{eq:conditional-tree}
 |u(I)|\le
 \left(\max_{\substack{i,j\in I\\i\ne j}}G_{ij}\right)^{1/2}
 \sum_{\substack{T\ \mathrm{tree}\\\mathrm{on}\ I}}
 \prod_{\{i,j\}\in E(T)}G_{ij}^{1/2}.
}
The two-label case is equality. For the induction step, choose $a,b$
attaining the maximum and apply \cref{eq:conditional-split}.
The two inductive prefactors have product at most $G_{ab}$.
Its two square roots supply the prefactor in \cref{eq:conditional-tree}
and an edge $\{a,b\}$ joining the two trees. This construction is
injective: deleting that fixed edge recovers both trees and the block
containing $a$.

For the lattice model, set
\eq{
 Q_{\beta,\Lambda}=\sup_{x\in\Lambda}\sum_{y\in\Lambda}
 \EE_{\beta,\Lambda}[\sigma_x\sigma_y]^{1/2}.
}
Dropping the prefactor in \cref{eq:conditional-tree}, summing successively
over tree leaves and using Cayley's count gives, for even $n\ge2$,
\eql{\label{eq:conditional-sum}
 \frac{|\partial_h^n\log Z_{\beta,\Lambda}(0)|}{n!}
 \le\frac{\beta^n|\Lambda|n^{n-2}Q_{\beta,\Lambda}^{n-1}}{n!}
 \le\frac{|\Lambda|}{Q_{\beta,\Lambda}}
       (e\beta Q_{\beta,\Lambda})^n.
}
Odd derivatives vanish. The logarithmic Taylor series converges for
$|h|<(e\beta Q_{\beta,\Lambda})^{-1}$, and its exponential equals
$Z_{\beta,\Lambda}(h)/Z_{\beta,\Lambda}(0)$ there by the identity theorem.
Consequently, under \cref{eq:cjn-conditional},
\eql{\label{eq:conditional-gap}
 \calL_{\beta,\Lambda}\ge(e\beta Q_{\beta,\Lambda})^{-1}.
}
Finally, summing the square root of \cref{eq:decay} over square shells
gives $Q_{\beta,\Lambda}\le C m_\beta^{-15/8}$.
Thus \cref{eq:conditional-gap} would prove
$\calL_\beta\ge c m_\beta^{15/8}$, and hence the upper bound in
\cref{eq:main}, without the block expansion or planar crossing argument.
The derivative bounds would then follow from \cref{eq:derivative-bound}
exactly as in \cref{sec:lower}.

\printbibliography[title={References}]
\end{document}